\newcommand{\ud}{\,\mathrm{d}}
\definecolor{DarkGreen}{rgb}{0.2,0.6,0.2}
\def\green#1{\textcolor{DarkGreen}{#1}}
 \def\green#1{}
\def\om{\omega}
\numberwithin{equation}{section}
\def\ua{\uparrow}
\def\wh{\widehat}
\def\wt{\widetilde}
\def\bbar{\overline}
\def\ignore#1{}
\def\bR{{\mathbb R}}
\def\bC{\mathbb C}
\def\bN{\mathbb N}
\def\bT{\mathbb T}
\def\bP{{\mathbb P}}
\def\bE{{\mathbb E}}
\def\cA{{\mathscr A}}
\def\cS{{\mathscr S}}
\def\cL{{\mathscr L}}
\def\cX{{\mathscr X}}
\def\cY{{\mathscr Y}}
\def\mb#1{\boldsymbol{\mathbf{#1}}}
\def\<{\langle}\def\>{\rangle}
\newtheorem{theorem}{Theorem}[section]
\newtheorem{proposition}[theorem]{Proposition}
\newtheorem{corollary}[theorem]{Corollary}
\newtheorem{lemma}[theorem]{Lemma}
\theoremstyle{definition}
\newtheorem{definition}[theorem]{Definition} 
\newtheorem{example}[theorem]{Example} 
\newtheorem{remark}[theorem]{Remark}
\title{\textbf{Pathwise no-arbitrage in a class\\ of Delta hedging strategies}}
\author{ \normalsize Alexander Schied and Iryna Voloshchenko\footnote{E-mail addresses: {\tt schied@uni-mannheim.de} and {\tt ivoloshc@uni-mannheim.de} \hfill\break
The authors thank Li Chen and Martin U.~Schmidt for helpful comments and gratefully acknowledge support by Deutsche Forschungsgemeinschaft through the Research Training Group RTG 1953.}\\  \normalsize Department of Mathematics\\
 \normalsize University of Mannheim\\
 \normalsize 68131 Mannheim, Germany}
\date{\normalsize First version: January 2, 2016\\
This version: June 14, 2016}
\begin{document}	

\maketitle 

\begin{abstract} 
We consider a strictly pathwise setting for Delta  hedging exotic options, based on F\"ollmer's pathwise It\^o calculus. Price trajectories are $d$-dimensional continuous functions whose pathwise quadratic variations and covariations are determined by a given  local volatility matrix. The existence of Delta hedging strategies in this pathwise setting is established via   existence results for  recursive schemes of parabolic Cauchy problems and via the existence of functional Cauchy problems on path space. Our main results establish the nonexistence of   pathwise arbitrage opportunities in   classes of strategies containing these Delta hedging strategies and under relatively mild conditions on the  local volatility matrix.
 \end{abstract}

\noindent{\bf Keywords:} Pathwise hedging; exotic options; pathwise arbitrage; pathwise It\^o calculus; F\"ollmer integral; local volatility; functional It\^o formula; functional Cauchy problem on path space

\section{Introduction} 

In mainstream finance, the price evolution of a risky asset is usually modeled as a stochastic process defined on some probability space and hence is subject to model uncertainty. 
In a number of situations, however, it is possible to construct continuous-time strategies on a path-by-path basis and without making any probabilistic assumptions on the asset price evolution. A theory of hedging European options of the form $H=h(S(T))$ for one-dimensional asset price trajectories $S=(S(t))_{0\le t\le T}$ was developed by Bick and Willinger~\citep{BickWillinger}
by using F\"ollmer's~\citep{FoellmerIto} approach to pathwise It\^o calculus. Bick and Willinger~\citep{BickWillinger}
 showed, in particular, that if $S$ is strictly positive and admits a pathwise quadratic variation of the form $\<S,S\>(t)=\int_0^ta(s,S(s))\ud s$ for some function $a(s,x)>0$, then a solution $v$ to the terminal-value problem
 \begin{equation}\label{TVP0}
\begin{cases}v\in C^{1,2}([0,T)\times \bR_+)\cap C([0,T]\times \bR_+),\\  \frac{\partial v}{\partial t}+a\frac{\partial^2v}{\partial x^2}=0 \text{ in $[0,T)\times \bR_+$,}\\
v(T,x)=h( x) \text{, $  x\in \bR_+$,}
\end{cases}
\end{equation}
is such that $v(t,S(t))$ is the portfolio value of a self-financing trading strategy that perfectly replicates the option $H=h(S(T))$ in a strictly pathwise sense. In particular, the amount $v(0,S(0))$ can be regarded as the cost required to hedge the option $H$. In continuous-time finance, this amount is usually equated with an arbitrage-free price of $H$. The latter interpretation, however, is not clear in the pathwise situation, because one first needs to exclude the existence of arbitrage in a strictly pathwise sense. 

In the present paper we pick up the approach from~\citep{BickWillinger} and, in a first step, extend their results to a setting with a $d$-dimensional price trajectory, $\mb S(t)=(S_1(t),\dots, S_d(t))^\top$, and an exotic derivative of the form $H=h(\mb S(t_0),\dots, \mb S(t_N))$, where $t_0<t_1<\cdots<t_N$ are the fixing times of daily closing prices and $ h$ is a certain function.
In practice, most  exotic derivatives that pay off at maturity (i.e., European-style) are of this form. Using ideas from~\citep{SchiedStadje}, we show that such options can be hedged in a strictly pathwise sense if a certain recursive scheme of terminal-value problems  \eqref{TVP0} can be solved, and we provide sufficient conditions for the existence and uniqueness of the corresponding solutions. 

In the second part of the paper we then 
approach the absence of strictly pathwise arbitrage within a class of strategies that are based on solutions of recursive schemes of terminal-value problems. This class of strategies hence  includes, in particular,  the Delta hedging strategies of exotic derivatives of the form $H=h(\mb S(t_0),\dots, \mb S(t_N))$. Our main result, Theorem~\ref{no arb thm}, states 
that there are no admissible arbitrage opportunities as soon as the covariation of the price trajectory is of the form 
\begin{equation}\label{qv class strat}
\ud\<S_i,S_j\>=\begin{cases}a_{ij}(t,\mb S(t))\ud t&\text{if $\mb S$ takes values in all of $\bR^d$,}\\
a_{ij}(t,\mb S(t))S_i(t)S_j(t)\ud t&\text{if $\mb S$ takes values in   $\bR^d_+$,}
\end{cases}
\end{equation}
and the matrix $a(t,\mb x)=(a_{ij}(t,\mb x))$ is continuous, bounded, and positive definite. Here,   admissibility refers to the usual requirement that the portfolio value of a strategy must be bounded from below for all considered price trajectories.

Our  result 
on the absence of arbitrage is related to~\citep[Theorem 4]{AlvarezFerrando}, where the absence of pathwise arbitrage is established for $d=1$, $a>0$ constant, and a certain class of smooth strategies. There are, however, several differences between this and our result. First, we consider a more general class of price trajectories that are based on local instead of constant volatility, allow for an arbitrary number $d$ of traded assets, and may either be strictly positive or of the Bachelier type. Second, our class of trading strategies comprises the natural Delta hedging strategies for path-dependent exotic options and, third, we use a completely different approach to prove our result; while Alvarez et al.~\citep{AlvarezFerrando}
use a continuity argument to transfer the absence of arbitrage from the probabilistic Black--Scholes model to a pathwise context, our proof does not rely on any probabilistic asset 
pricing model. Instead, we use Stroock's and Varadhan's idea  for  a probabilistic proof~\citep{StroockVaradhanSupport} of Nirenberg's strong parabolic maximum principle.

We then consider a  setup, in which  an option's payoff may depend on the full trajectory of   asset prices. In this functional framework, F\"ollmer's pathwise It\^o formula needs to be replaced by its functional extension, which was formulated by Dupire~\citep{Dupire} and further developed by Cont and Fourni\'e~\citep{CF}. Furthermore, the Cauchy problem \eqref{TVP0} (and the corresponding  iterated scheme) need to be replaced by a functional Cauchy problem on path space as studied in  Peng and Wang~\citep{Peng2011} and  Ji and Yang~\citep{Ji2013}. We provide versions of our results on hedging strategies and the absence of pathwise arbitrage also in this functional setting.

There are many other approaches to hedging and arbitrage in the face of model risk. For continuous-time results, see, for instance, Lyons~\citep{Lyons95}, Hobson~\citep{Hobson,HobsonNotes}, Vovk~\citep{Vovk11,Vovk12,Vovk14}, Bender et al.~\citep{Benderetal1}, Davis et al.~\citep{DavisRavalObloij}, Biagini et al.~\citep{Biagini}, Beiglb\"ock et al.~\citep{beiglbock} Schied et al.~\citep{SchiedSpeiserVoloshchenko}, and the references therein.  Discrete-time settings were, for instance,  considered in  Acciaio et al.~\citep{Acciaio13}, Bouchard and Nutz~\citep{Bouchard15}, F\"ollmer and Schied~\citep[Section 7.4]{FoellmerSchiedBuch}, Riedel~\citep{Riedel}, and again the references therein.

This paper is organized as follows. In Section~\ref{pathwise hedging  section}, we introduce a general framework for continuous-time trading by means of F\"ollmer's pathwise It\^o calculus~\citep{FoellmerIto}. Based on an extension of an argument from~\citep{FoellmerECM}, our Proposition~\ref{most elementary Ito formula prop} will, in particular, justify the assumption that   price trajectories should admit pathwise quadratic variations and covariations. We will then introduce the  pathwise framework for hedging exotic options \`a la Bick and Willinger~\citep{BickWillinger}. In Section~\ref{arbitrage section}, we will introduce the class of strategies to which our no-arbitrage result, Theorem~\ref{no arb thm}, applies. The extension to the functional setting is given in Section~\ref{Functional Section}. All proofs are contained in  Section~\ref{Proofs section}.

\section{
Strictly pathwise hedging of exotic derivatives}\label{pathwise hedging  section} 

Pathwise It\^o calculus can be used to model financial markets without probabilistic assumptions on the underlying asset price dynamics; see, e.g.,~\citep{Benderetal1,BickWillinger,DavisRavalObloij,FoellmerECM,Lyons95,SchiedCPPI,SchiedStadje,SchiedSpeiserVoloshchenko} for corresponding case studies. 
In this section, we first motivate and describe a general setting for such an approach to asset price modeling and to  the hedging of derivatives. Let us assume that we wish to trade continuously in $d+1$ assets. The first is a riskless bond, $B(t)$, of which we assume for simplicity that it is of the form $B(t)=1$ for all $t$. This assumption can be justified by assuming that we are dealing here only with properly discounted asset prices. The prices of the $d$ risky assets  will be described by  continuous functions $S_1(t),\dots, S_d(t)$, where the time parameter $t$ varies over a certain time interval $[0,T]$. Throughout this paper, we will use vector notation such as $\mb S(t)=(S_1(t),\dots, S_d(t))^\top$. For the moment, when $\mb S=(\mb S(t))_{0\le t\le T}$ is fixed,
a trading strategy will consist of a pair  of  functions $\mb{\xi}=(\xi_1,\dots,\xi_d)^\top$ and $\eta$, where $\xi_i(t)$ describes the number of shares held at time $t$ in the $i^{\text{th}}$ risky asset  and $\eta(t)$ does the same for the riskless asset. The \emph{portfolio value} of $(\mb\xi(t),\eta(t))$ is then given as
\begin{equation}
V(t):=\mb\xi(t) \cdot\mb S(t)+\eta(t)B(t)=\mb\xi(t) \cdot\mb S(t)+\eta(t),\qquad 0\le t\le T,
\end{equation}
where $\mb x\cdot\mb y$ denotes the euclidean inner product of two vectors $\mb x$ and $\mb y$.

A key concept of mathematical finance is the notion of a self-financing trading strategy. 
If trading is only possible at finitely many times $0=t_0<t_1<\cdots<t_N<T$, then $\mb \xi$ and $\eta$ will be constant on each interval $[t_i,t_{i+1})$ and on $[t_N,T]$. In this case it is well-known from discrete-time mathematical finance that  
the trading strategy $(\mb\xi,\eta)$ is self-financing if and only if
\begin{equation}\label{diskreteSFBedingung}
V_{t_i}-V_0=\sum_{k=1}^{i}\mb\xi_{t_{k-1}}(\mb S_{t_k}-\mb S_{t_{k-1}})+\sum_{k=1}^{i}\eta_{t_{k-1}}(B_{t_k}-B_{t_{k-1}}), \qquad
i=1,\dots, N.
\end{equation}
By making the mesh of the partition $\{t_0,\dots,t_N\}$ 
finer and finer, the Riemann sums on the right-hand side of \eqref{diskreteSFBedingung} should converge to corresponding integrals, $\int_0^t\mb\xi(s)\ud \mb S(s)$ and $\int_0^t\eta(s)\ud B(s)$. Clearly, $\int_0^t\eta(s)\ud B(s)$ is a Riemann-Stieltjes integral, and criteria for its existence are well known. For a very specific class of strategies $\mb\xi$, the following proposition gives necessary and sufficient conditions for the existence of  $\int_0^t\mb\xi(s)\ud \mb S(s)$. This proposition extends and elaborates an argument by F\"ollmer~\citep{FoellmerECM}. Before stating this proposition, let us fix for the remainder of this paper a refining sequence of partitions, $(\bT_n)_{n\in\bN}$. That is, each $\bT_n$ is a finite partition of the interval $[0,T]$, and we have $\bT_1\subset\bT_2\subset\cdots$ and the mesh of $\bT_n$ tends to zero as $n\ua\infty$. Moreover, it will be convenient to denote the successor of $t\in\bT_n$ by $t'$. That is, 
$$t'=\begin{cases}\min\{u\in\bT_n\,|\,u>t\}&\text{if $t<T$,}\\
T&\text{if $t=T$.}
\end{cases}
$$

\begin{proposition}\label{most elementary Ito formula prop}Let $t\mapsto \mb S(t)\in\bR^d$ be a continuous function on $[0,T]$. For $i,j\in\{1,\dots, d\}$  and  $K_{ij}\in\bR$ with $K_{ij}=K_{ji}$, we define the trading strategy $\mb\xi^{ij}=(\xi^{ij}_1,\dots,\xi^{ij}_d)^\top$ through
\begin{equation}\label{simple xi strategies}
\xi^{ij}_k(t)=\begin{cases}2\big(S_i(t)+S_j(t)-K_{ij}\big)&\text{if $i\neq j$ and $k=i$ or $k=j$,}\\
2\big(S_i(t)-K_{ii}\big)&\text{if $i= j$ and $k=i$,}\\
0&\text{otherwise.}\end{cases}
\end{equation}
Then  $\int_0^t\mb\xi^{ij}(t)\ud\mb S(t)$ exists for all $t$ and  all $i,j$ as the finite limit of the corresponding Riemann sums, i.e., 
\begin{equation}\label{first def of int dS via Riemann sums}
\int_0^t\mb\xi^{ij}(s)\ud\mb S(s)=\lim_{n\ua\infty}\sum_{s\in\bT_n,\,s\le  t}\mb\xi^{ij}(s)(\mb S(s')-\mb S(s)),
\end{equation}
if and only if the \emph{covariations,}
\begin{equation}\label{quadratic variation eq}
\<S_i,S_j\>(t):=\lim_{n\ua\infty}\sum_{s\in\bT_n,\,s\le  t}(S_i(s')-S_i(s))(S_j(s')-S_j(s)),
\end{equation}
exist in $\bR$ for all $t$ and all $i,j$. In this case it follows that 
\begin{equation}\label{most elementary Ito formula}
\begin{split}
\int_0^t\mb\xi^{ii}(s)\ud\mb S(s)=\big(S_i(t)-K_{ii}\big)^2-\big(S_i(0)-K_{ii})^2-\<S_i,S_i\>(t),
\end{split}
\end{equation}
and, for $i\neq j$,
\begin{equation}\label{most elementary Ito formula 2}
\begin{split}
\int_0^t\mb\xi^{ij}(s)\ud\mb S(s)=\big(S_i(t)+&S_j(t)-K_{ij}\big)^2-\big(S_i(0)+S_j(0)-K_{ij})^2-\sum_{k,\ell\in\{i,j\}}\<S_k,S_\ell\>(t).
\end{split}
\end{equation}
\end{proposition}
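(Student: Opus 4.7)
The plan is to reduce everything to a single algebraic identity together with a telescoping argument. The driving identity is
\begin{equation*}
2(x-K)(y-x)=(y-K)^2-(x-K)^2-(y-x)^2,
\end{equation*}
obtained by expanding the square $(y-K)^2=((y-x)+(x-K))^2$. I will first handle the diagonal case $i=j$. Setting $x=S_i(s)$, $y=S_i(s')$, $K=K_{ii}$, the identity yields
\begin{equation*}
\mb\xi^{ii}(s)\cdot(\mb S(s')-\mb S(s))=\bigl(S_i(s')-K_{ii}\bigr)^2-\bigl(S_i(s)-K_{ii}\bigr)^2-\bigl(S_i(s')-S_i(s)\bigr)^2.
\end{equation*}
Summing over $s\in\bT_n$ with $s\le t$, the first difference telescopes to $(S_i(\sigma_n(t))-K_{ii})^2-(S_i(0)-K_{ii})^2$, where $\sigma_n(t)$ denotes the successor in $\bT_n$ of the largest partition point $\le t$. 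Continuity of $\mb S$ together with mesh$(\bT_n)\to 0$ forces $\sigma_n(t)\to t$, hence the telescoped term converges to $(S_i(t)-K_{ii})^2-(S_i(0)-K_{ii})^2$. Thus the Riemann sum on the left converges (in $\bR$) if and only if $\sum_{s\le t}(S_i(s')-S_i(s))^2$ does, i.e., iff $\<S_i,S_i\>(t)$ exists, and in the affirmative case one reads off \eqref{most elementary Ito formula}.

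For the off-diagonal case $i\ne j$, I would introduce $W:=S_i+S_j$ and observe that by the very definition of $\mb\xi^{ij}$,
\begin{equation*}
\mb\xi^{ij}(s)\cdot(\mb S(s')-\mb S(s))=2\bigl(W(s)-K_{ij}\bigr)\bigl(W(s')-W(s)\bigr),
\end{equation*}
i.e., the sum collapses into the one-dimensional situation already treated, now applied to the continuous function $W$ with constant $K_{ij}$. Hence this Riemann sum converges iff $\<W,W\>(t)=\lim_n\sum_{s\le t}(W(s')-W(s))^2$ exists, and in that case its limit equals $(W(t)-K_{ij})^2-(W(0)-K_{ij})^2-\<W,W\>(t)$.

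To close the equivalence I use the polarization identity
\begin{equation*}
(W(s')-W(s))^2=(S_i(s')-S_i(s))^2+2(S_i(s')-S_i(s))(S_j(s')-S_j(s))+(S_j(s')-S_j(s))^2
\end{equation*}
for each $s$: summing over $s\in\bT_n$ with $s\le t$ shows that once $\<S_i,S_i\>(t)$ and $\<S_j,S_j\>(t)$ exist, existence of $\<W,W\>(t)$ is equivalent to existence of $\<S_i,S_j\>(t)$ as defined by \eqref{quadratic variation eq}, and the expected bilinear relation $\<W,W\>(t)=\<S_i,S_i\>(t)+2\<S_i,S_j\>(t)+\<S_j,S_j\>(t)$ holds. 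Substituting this into the limit for $\int_0^t\mb\xi^{ij}(s)\,d\mb S(s)$ produces \eqref{most elementary Ito formula 2}. Taken together, the diagonal and off-diagonal arguments give the claimed equivalence: all $d(d+1)/2$ Riemann sums converge simultaneously iff all covariations $\<S_i,S_j\>(t)$ exist, and then the two stated formulas hold.

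I do not expect a genuine obstacle here; the proof is essentially one polarization identity plus a telescoping sum. The only point needing care is the bookkeeping of the last partition point $\le t$ and its successor, which is handled cleanly by appealing to the uniform continuity of $\mb S$ on $[0,T]$ and the assumption mesh$(\bT_n)\to 0$.
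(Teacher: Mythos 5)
Your proposal is correct and follows essentially the same route as the paper: the same squared-difference identity with telescoping for the diagonal case, and the same reduction of the off-diagonal case to the quadratic variation of $S_i+S_j$ via polarization, using the already established existence of the diagonal terms. The paper merely states the off-diagonal step more tersely; your write-up fills in exactly the intended details, including the care with the last partition point and its successor.
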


The preceding proposition has the following two complementary implications.
\begin{itemize}
\item If one wishes to deal with the very simple strategies of the form \eqref{simple xi strategies}, then one must necessarily assume that the components of the asset price trajectory $\mb S$ admit all pathwise quadratic variations and covariations of the form \eqref{quadratic variation eq}.
\item Suppose that the quadratic variation of $S_i$ exists and vanishes identically. This is, for instance, the case if $S_i$ is Hölder continuous for some exponent $\alpha>1/2$.  Then, for $\mb\xi^{ii}$ as in  \eqref{simple xi strategies} and $K_{ii}=S_i(0)$, the integral $\int_0^t\mb\xi(s)\ud \mb S(s)$ exists for all $t$. By letting $\eta(t):=\int_0^t\mb\xi(s)\ud \mb S(s)-\mb\xi^{\mb S}(t)\cdot\mb S(t)$, we obtain a self-financing trading strategy whose portfolio value is  given by
$V(t)=(S_i(t)-S_i(0))^2$. But this is clearly an arbitrage opportunity as soon as $S_i$ is not constant. Hence, price trajectories of a risky asset necessarily need to be modeled by functions with nonvanishing quadratic variation.
\end{itemize}
These two aspects imply that it is reasonable to require 
that price trajectories $\mb S$ of a risky asset possess all covariations $\<S_i,S_j\>$ in the sense that the limit in \eqref{quadratic variation eq} exists for all $t\in[0,T]$.    It was shown by F\"ollmer~\citep{FoellmerIto} that, if in addition the covariations are continuous functions of $t$, It\^o's formula holds in  a  strictly pathwise sense (see also~\citep{Sondermann} for additional background and an English translation of 
~\citep{FoellmerIto}). Let us thus denote  by $QV^d$ the class of all continuous functions $\mb S:[0,T]\to\bR^d$ on $[0,T]$ for which all covariations $\<S_i,S_j\>(t)$  exist along $(\bT_n)$ and are continuous functions of $t$. We point out that the existence and the value of the covariation $\<S_i,S_j\>(t)$, and hence the space $QV^d$, 
 depend in an essential manner on the choice of the refining sequence of partitions, $(\bT_n)$; see, e.g.,~\citep[p. 47]{Freedman}. Moreover, $QV^d$ is not a vector space~\citep{SchiedTakagi}. It follows easily from  Föllmer's pathwise It\^o formula that for the following class of \lq\lq basic admissible integrands\rq\rq\ $\mb\xi$, the It\^o integral $\int_r^t\mb\xi(s)\ud\mb S(s)$ exists for all $t\in[r,u]\subset[0,T]$ as the finite limit of Riemann sums in \eqref{first def of int dS via Riemann sums}; see~\citep[p.~86]{SchiedCPPI}. This integral is sometimes also called the \emph{F\"ollmer integral}.
 
 \begin{definition}[\bfseries Basic admissible integrands]For $0\le r<u\le T$, an $\bR^d$-valued function $[r,u]\ni t\mapsto\mb\xi(t)$ is called a \emph{basic admissible integrand} for $\mb S\in QV^d$, if there exist $m\in\bN$, a continuous function $\mb A:[r,u]\to\bR^m$ whose components are functions of bounded variation, an open set $O\subset \bR^m\times\bR^d$ such that $(\mb A(t),\mb S(t))\in O$ for all $t$, and a continuously differentiable function $f:O\to\bR$ for which  the function $\mb x\to f(\mb A(t),\mb x)$ is for all $t$ twice continuously differentiable on its domain, such that 
 $$\mb\xi(t)=\nabla_{\mb x}f(\mb A(t),\mb S(t)),
 $$ 
 where  $\nabla_{\mb x}f(\mb a,\mb x)$ denotes the gradient of $\mb x\to f(\mb a,\mb x)$.
 \end{definition}

 Following~\citep{BickWillinger}, we will from now on consider not just one particular price trajectory $\mb S$, but admit an entire class $\cS\subset QV^d$  
 of such trajectories so as to account for the uncertainty of the actual realization of the price trajectory. Specifically, we will consider the classes
 \begin{align*}
 \cS_a:=\Big\{\mb S\in QV^d\,\Big|\,\<S_i,S_j\>(t)=\int_0^ta_{ij}(s,\mb S(s))\ud s\text{ for all $t\in[0,T]$ and $1\le i,j\le d$}\Big\}
 \end{align*}
and
 \begin{align*}
  \cS_a^+:=\Big\{\mb S\in QV^d\,\Big|\,S_i(t)>0,~ \<S_i,S_j\>(t)=\int_0^ta_{ij}(s,\mb S(s))S_i(s)S_j(s)\ud s\text{ for all $t$ and $ i,j$}\Big\},
 \end{align*}
where $a(t,\mb x)=(a_{ij}(t,\mb x))_{i,j=1,\dots,d}$ is a  continuous function of $(t,\mb x)\in[0,T]\times\bR^d$ (respectively of $(t,\mb x)\in[0,T]\times\bR_+^d$ in case of $\cS_a^{+}$) into the set of positive definite symmetric $d\times d$-matrices. Additional assumptions on $a(t,\mb x)$ will be formulated later on. 
 Here, $\bR_+:=(0,\infty)$, and we will write $\bR^d_{(+)}$ to denote the two possibilities, $\bR^d$ and $\bR^d_+$, according to whether we are considering $\cS_a$ or $\cS_a^+$. Similarly, we will write $\cS_a^{(+)}$ etc. Price trajectories in $  \cS_a^+$ can arise as sample paths of multi-dimensional local volatility models. At least for $d=1$, the local volatility function $\sigma(\cdot):=\sqrt{a(\cdot)}$ is  often chosen by calibrating to the market prices of liquid plain vanilla 
options~\citep{DupireSmile}. Since in practice  there are only finitely many given options prices, $\sigma(\cdot)$ is typically only determined on a finite grid~\citep{BuehlerSmile}, and so regularity assumptions on  $\sigma(\cdot)$ can  be made without loss of generality.

Our next goal is to introduce and characterize a class of self-financing trading strategies  that  may depend on the current value of  the  particular realization $\mb S\in\cS_a^{(+)}$ and includes candidates for hedging strategies of European derivatives. Before that, let us introduce some notation. By $C(D)$ we will denote the class of real-valued continuous functions on  a  set $D\subset \bR^n$.
 For an interval $I\subset [0,T]$ with nonempty interior, $\mathring{I}$, we denote by $C^{1,2}(I\times \bR^d_{(+)})$ the class of all functions in $C(I\times \bR^d_{(+)})$ that are continuously differentiable in $(t,\mb x)\in\mathring{I}\times \bR^d_{(+)}$, twice continuously differentiable in $\mb x$ for all $t\in\mathring{I}$, and whose derivatives admit continuous extensions to $I\times \bR^d_{(+)}$.  Let us also introduce the following second-order differential operators,
 $$\cL:=\frac12\sum_{i,j=1}^da_{ij}(t,\mb x)\frac{\partial^2}{\partial x_i\partial x_j}\qquad \text{and}\qquad  \cL^+:=\frac12\sum_{i,j=1}^da_{ij}(t,\mb x)x_ix_j\frac{\partial^2}{\partial x_i\partial x_j}.
 $$

 \begin{proposition}\label{sf prop}Suppose that $0\le r<u\le T$ and that $v\in C^{1,2}([r,u]\times \bR^d_{(+)})$. Then the following conditions are equivalent.
 \begin{enumerate}
 \item For each $\mb S\in\cS_a^{(+)}$, there exists a basic admissible integrand $\bm\xi^{\mb S}$ on $[r,u]$ such that 
 $$v(t,\mb S(t))=v(r,\mb S(r))+\int_r^t\mb\xi^{\mb S}(s)\,\ud \mb S(s)\qquad\text{for $t\in[r,u]$.}
 $$
 \item The function $v$ satisfies the parabolic equation
 \begin{equation}
 \label{heat eq}
 \frac{\partial v}{\partial t}+\cL^{(+)} v=0\qquad\text{in $[r,u]\times \bR^d_{(+)}$.}
 \end{equation}
 \end{enumerate}
 Moreover, if these equivalent conditions hold, then $\bm\xi^{\mb S}$ in {\rm(a)} must necessarily be of the form 
 \begin{equation}\label{xi ident eq}
 \bm\xi^{\mb S}(t)=\nabla_{\mb x}v(t,\mb S(t)).
 \end{equation}
  \end{proposition}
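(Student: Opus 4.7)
The plan is to use F\"ollmer's pathwise It\^o formula as the bridge between (a) and (b). Since $v\in C^{1,2}([r,u]\times \bR^d_{(+)})$ and any $\mb S\in \cS_a^{(+)}$ has continuous covariations, the formula gives
\begin{equation*}
v(t,\mb S(t)) - v(r,\mb S(r)) = \int_r^t (\partial_t v + \cL^{(+)} v)(s,\mb S(s))\,\ud s + \int_r^t \nabla_{\mb x} v(s,\mb S(s))\cdot \ud \mb S(s),
\end{equation*}
after inserting the structural form of $\<S_i,S_j\>$ to rewrite the quadratic-variation term as $\int_r^t \cL^{(+)}v\,\ud s$. The F\"ollmer integral on the right exists because $\nabla_{\mb x} v(\cdot,\mb S(\cdot))$ is a basic admissible integrand (take $m=1$, $\mb A(t)=t$, and the generating function $v$ itself). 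The implication (b) $\Ra$ (a) is then immediate: the Lebesgue integral vanishes, and one takes $\bm\xi^{\mb S}(t)=\nabla_{\mb x} v(t,\mb S(t))$, which simultaneously proves the representation and the identification \eqref{xi ident eq} in this direction.

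For (a) $\Ra$ (b) together with \eqref{xi ident eq} in general, I would subtract the It\^o formula above from the representation assumed in (a) to obtain, for $\bm\eta(s):=\bm\xi^{\mb S}(s) - \nabla_{\mb x} v(s,\mb S(s))$,
\begin{equation*}
\int_r^t \bm\eta(s)\cdot \ud\mb S(s) = -\int_r^t (\partial_t v + \cL^{(+)} v)(s,\mb S(s))\,\ud s.
\end{equation*}
The right-hand side is absolutely continuous in $t$ and hence has vanishing pathwise quadratic variation. The left-hand side is a difference of two F\"ollmer integrals of basic admissible integrands, and by bilinearity of pathwise covariation (applied termwise via Proposition~\ref{most elementary Ito formula prop}, or equivalently by applying the pathwise It\^o formula to the square of the integral), its quadratic variation equals $\int_r^t \sum_{i,j}\eta_i\eta_j\,\ud\<S_i,S_j\>$, which under either structural assumption reduces to an integral of a nonnegative quadratic form in $\bm\eta$ with strictly positive definite coefficient (in the $\cS_a^+$ case the extra factors $S_i S_j>0$ are harmless). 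Continuity of $\bm\eta$ therefore forces $\bm\eta\equiv 0$ on $[r,u]$; this yields \eqref{xi ident eq} and leaves $\int_r^t(\partial_t v+\cL^{(+)}v)(s,\mb S(s))\,\ud s\equiv 0$, so differentiation and continuity give $(\partial_t v+\cL^{(+)} v)(s,\mb S(s))=0$ pointwise along every $\mb S\in\cS_a^{(+)}$.

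The main obstacle is the final step of promoting this path-wise vanishing to the PDE \eqref{heat eq} on all of $[r,u)\times \bR^d_{(+)}$, which requires that for every $(t_0,\mb x_0)$ there exists an $\mb S\in\cS_a^{(+)}$ with $\mb S(t_0)=\mb x_0$. This is genuinely delicate because membership in $QV^d$ depends on the fixed refining sequence $(\bT_n)$. I would deal with it by the standard probabilistic construction: let $\mb S$ be a sample path of the SDE with diffusion matrix $\sigma$ satisfying $\sigma\sigma^\top=a$ (or its geometric analogue for $\cS_a^+$); under mild regularity on $a$, almost every such path admits the prescribed pathwise covariations along $(\bT_n)$ and thus lies in $\cS_a^{(+)}$, while by the Stroock--Varadhan support theorem such paths can be arranged to pass through any prescribed $(t_0,\mb x_0)$. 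Continuity of $\partial_t v+\cL^{(+)} v$ then extends the vanishing from the union of these paths to all of $[r,u]\times\bR^d_{(+)}$, completing the proof.
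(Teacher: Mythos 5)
Your proposal is correct and follows essentially the same route as the paper's proof: pathwise It\^o formula, comparison of quadratic variations (the formula you sketch for the quadratic variation of the F\"ollmer integral is exactly the fact the paper quotes from the literature rather than rederiving), positive definiteness of $a$ (respectively of $(a_{ij}S_iS_j)$) forcing $\bm\xi^{\mb S}-\nabla_{\mb x}v\equiv 0$, and then differentiation in $t$. Your final step---promoting the vanishing of $\frac{\partial v}{\partial t}+\cL^{(+)}v$ along paths to the PDE on all of $[r,u]\times\bR^d_{(+)}$ via richness of $\cS_a^{(+)}$ (diffusion sample paths, the support theorem, and continuity)---is a point the paper's proof leaves implicit, and your treatment of it is consistent with the probabilistic facts the paper itself invokes later in the proof of its no-arbitrage theorem.
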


Now suppose that $f:\bR^d_{(+)}\to\bR$ is a continuous function for which  there exists a solution $v$ to the following terminal-value problem,
\begin{equation*}
\begin{cases}v\in C^{1,2}([0,T)\times \bR^d_{(+)})\cap C([0,T]\times \bR^d_{(+)}),\\  \frac{\partial v}{\partial t}+\cL^{(+)} v=0 \text{ in $[0,T)\times \bR^d_{(+)}$,}\\
v(T,\mb x)=f(\mb x) \text{ for $\mb x\in \bR^d_{(+)}$.}
\end{cases}\leqno{(\text{TVP}^{(+)})}
\end{equation*}
For $\mb S\in \cS_a^{(+)}$ and $t\in[0,T)$, we can define  
\begin{equation}\label{hedging strat eq}
\mb\xi^{\mb S}(t):=\nabla_{\mb x}v(t,\mb S(t))\qquad\text{and}\qquad \eta^{\mb S}(t):=v(t,\mb S(t))-\mb\xi^{\mb S}(t)\cdot\mb S(t).
\end{equation}
We then obtain from Proposition~\ref{sf prop} that 
 \begin{equation}
 \begin{split}
\mb\xi^{\mb S}(t)\cdot\mb S(t)+\eta^{\mb S}(t)= v(t,\mb S(t))=v(0,\mb S(0))+\int_0^t\mb\xi^{\mb S}(s)\ud \mb S(s).
 \end{split}
 \end{equation}
Thus,  $(\mb\xi^{\mb S},\eta^{\mb S})$ is a self-financing trading strategy with portfolio value $V^{\mb S}(t)=v(t,\mb S(t))$.  Since the function $v$ is continuous on $[0,T]\times \bR^d_{(+)}$, the limit $V^{\mb S}(T):=\lim_{t\uparrow T}V^{\mb S}(t)$ exists and satisfies 
$$V^{\mb S}(T)=f(\mb S(T))\qquad\text{for all $\mb S\in\cS_a^{(+)}$.}
$$
 In this sense, $(\mb\xi^{\mb S},\eta^{\mb S})$ is a strictly pathwise hedging strategy for the derivative with payoff $f(\mb S(T))$. 
 
 The preceding argument was first made by Bick and Willinger~\citep[Proposition 3]{BickWillinger} in a one-dimensional setting. It is remarkable in several respects. For instance, consider the one-dimensional case with $a(t,x)=\sigma^2x^2$ for some $\sigma>0$ so that $(\text{TVP}^{+})$ becomes the standard Black--Scholes equation, which can be solved for a large class of payoff functions $f$. The preceding argument then shows that the Black--Scholes formula---which is nothing other than an explicit formula for $v(0,S_0)$---can be derived without any probabilistic assumptions whatsoever. It follows, in particular, that 
  the fundamental assumption underlying the Black--Scholes formula is not  the log-normal distribution of asset price returns, but the fact that the quadratic variation of the asset prices is of the form $\<S,S\>(t)=\sigma^2\int_0^tS(s)^2\ud s$. Let us now state  general existence results for solutions of $(\text{TVP})$ and $(\text{TVP}^{+})$, which in the case of $(\text{TVP})$ is taken from Janson and Tysk~\citep{JansonTysk}. Recall that we assume that $a(t,\mb x)$ is positive definite for all $t$ and $\mb x$.
  
  \begin{theorem}\label{JansonTysk thm} Suppose that $f\in C(\bR^d_{(+)})$ has at most polynomial growth in the sense that  $|f(\mb x)|\le c_0(1+|\mb x|^p)$ for some constants $c_0,p>0$. Then, under the following conditions, $(\text{\rm TVP}^{(+)})$  admits a unique solution $v(t,\mb x)$ within the class of functions that are  of at most polynomial growth uniformly in~$t$.
   \begin{enumerate}
  \item  {\bf  (Theorem A.14 in~\citep{JansonTysk})} In case of $(\text{\rm TVP})$, we suppose  that $a_{ij}(t,\mb x)$ is locally Hölder continuous on $[0,T)\times \bR^d$ and that   $|a_{ij}(t,\mb x)|\le c_1(1+|\mb x|^2)$ for a constant $c_1\ge0$, all $(t,\mb x)\in[0,T]\times \bR^d$, and all $i,j$.
  \item In case of $(\text{\rm TVP}^+)$,  we suppose  that $a_{ij}(t,\mb x)$ is bounded and  locally Hölder continuous on $[0,T)\times \bR^d$ for all $i,j$.\end{enumerate}

   \end{theorem}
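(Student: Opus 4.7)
Part~(a) of the theorem is an application of Theorem~A.14 in~\citep{JansonTysk}, so I would just cite it. The substantive work lies in part~(b). My plan is to reduce $(\text{TVP}^+)$ to a non-degenerate parabolic Cauchy problem on $\bR^d$ with bounded coefficients by means of the logarithmic change of variables $\Phi:\bR^d_+ \to \bR^d$, $\Phi(\mb x) := (\log x_1, \dots, \log x_d)$. Setting $u(t, \mb y) := v(t, \Phi^{-1}(\mb y))$ and using the elementary chain-rule identities
\begin{equation*}
x_i^2\,\frac{\partial^2 v}{\partial x_i^2} = \frac{\partial^2 u}{\partial y_i^2} - \frac{\partial u}{\partial y_i}, \qquad x_i x_j\,\frac{\partial^2 v}{\partial x_i \partial x_j} = \frac{\partial^2 u}{\partial y_i\,\partial y_j} \quad (i \neq j),
\end{equation*}
one checks that $v \in C^{1,2}([0,T)\times\bR^d_+)\cap C([0,T]\times\bR^d_+)$ solves $(\text{TVP}^+)$ if and only if $u \in C^{1,2}([0,T)\times\bR^d)\cap C([0,T]\times\bR^d)$ solves the auxiliary Cauchy problem
\begin{equation*}
\frac{\partial u}{\partial t} + \frac{1}{2}\sum_{i,j=1}^d \wt a_{ij}(t,\mb y)\frac{\partial^2 u}{\partial y_i \partial y_j} - \frac{1}{2}\sum_{i=1}^d \wt a_{ii}(t,\mb y)\frac{\partial u}{\partial y_i} = 0 \quad \text{in } [0,T)\times\bR^d,
\end{equation*}
with terminal datum $u(T,\mb y) = f(\Phi^{-1}(\mb y))$ and $\wt a_{ij}(t,\mb y) := a_{ij}(t,\Phi^{-1}(\mb y))$.

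Under the hypotheses of part~(b), the transformed diffusion coefficients $\wt a_{ij}$ and drift coefficients $-\tfrac12 \wt a_{ii}$ are bounded and locally H\"older continuous on $[0,T)\times\bR^d$, and $\wt a$ is pointwise positive definite, hence uniformly elliptic on every compact subset. The polynomial-growth bound on $f$ translates into the exponential bound $|f(\Phi^{-1}(\mb y))|\le c_0(1 + e^{p|\mb y|})$ on the transformed terminal datum. The auxiliary problem is therefore a linear parabolic Cauchy problem on $\bR^d$ with bounded, locally H\"older coefficients and exponentially bounded terminal data. Classical existence theory for such problems (e.g., Friedman, \emph{Partial Differential Equations of Parabolic Type}, Chapter~1) constructs a fundamental solution satisfying Gaussian bounds and yields a classical solution $u \in C^{1,2}([0,T)\times\bR^d)\cap C([0,T]\times\bR^d)$ that is unique within the class of functions of exponential growth uniformly in $t$.

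Setting $v(t,\mb x) := u(t, \Phi(\mb x))$ then produces the desired solution of $(\text{TVP}^+)$ of at most polynomial growth in $\mb x$ uniformly in $t$, and uniqueness follows by transferring any two candidate solutions to the $\mb y$-coordinates and invoking the Tychonoff-type uniqueness theorem for bounded-coefficient parabolic Cauchy problems in the exponential-growth class. I expect the main obstacle to be that the transformed terminal datum has exponential rather than polynomial growth, which places the auxiliary problem just outside the direct scope of the Janson--Tysk result invoked in part~(a); this is resolved precisely by the extra hypothesis in part~(b) that the $\wt a_{ij}$ are \emph{bounded} (rather than of merely quadratic growth), since this boundedness yields the Gaussian bounds on the fundamental solution that make existence and uniqueness with exponentially bounded data routine.
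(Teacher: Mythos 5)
Your reduction of $(\text{TVP}^+)$ via the logarithmic substitution is exactly the paper's first step (Lemma~\ref{TVP transform lemma}), and your chain-rule identities and the resulting drift $-\tfrac12\wt a_{ii}$ are correct. The gap is in what you do next. You claim that the transformed problem, having bounded and locally H\"older coefficients, is covered by classical fundamental-solution theory (Friedman, Ch.~1) with Gaussian bounds, and that boundedness of $\wt a_{ij}$ is what makes this work. That is not justified under the stated hypotheses: the parametrix construction of a fundamental solution with Gaussian bounds requires \emph{uniform} ellipticity and \emph{globally uniform} H\"older continuity of the coefficients, whereas here $a(t,\mb x)$ is only pointwise positive definite (its smallest eigenvalue may degenerate as $|\mb x|\to\infty$) and only \emph{locally} H\"older on $[0,T)\times\bR^d$. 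Boundedness alone yields neither a fundamental solution nor Gaussian bounds, so the existence part of your argument does not go through in the generality of part~(b); the uniqueness half (a Phragm\'en--Lindel\"of/Tychonoff argument in the exponential class, which needs only boundedness and the maximum principle) is the less problematic half.

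The paper avoids this precisely by not leaving the Janson--Tysk framework, whose Theorem~A.14 is tailored to locally H\"older, non-uniformly-elliptic coefficients with polynomial growth bounds. Since after the log transform the terminal datum has exponential growth and so is outside the polynomial-growth scope of that theorem, the paper performs a second, multiplicative transformation (Lemma~\ref{wt TVP transform lemma}): dividing by $g(\mb x)=1+\sum_i e^{px_i}$ turns the exponentially growing datum into a bounded one, at the cost of a modified drift $\wh b_i$ of linear growth and a bounded zeroth-order coefficient $\wh c$, all still locally H\"older. The resulting problem $(\wh{\text{TVP}})$ then satisfies the hypotheses of~\citep[Theorem A.14]{JansonTysk}, giving a unique bounded solution, and multiplying back by $g$ and undoing the exponential substitution yields existence and uniqueness for $(\text{TVP}^+)$ in the polynomial-growth class. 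If you want to salvage your route, you would need either to assume uniform ellipticity and uniform H\"older continuity (strictly stronger than the theorem's hypotheses) or to supply an existence result for exponentially bounded data under merely local regularity --- which is in effect what the $g$-transformation plus Janson--Tysk accomplishes.
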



Our next goal is to extend the preceding hedging argument to the case of a path-dependent \emph{exotic option}. In practice, the payoff of such a derivative is usually of the form 
\begin{equation}\label{exotic eq}
H= h(\mb S({t_0}),\dots, \mb S({t_N}))
\end{equation}
where $0=t_0<t_1<\cdots<t_N= T$ denote the fixing times of daily closing prices and $ h$ is a certain function.

\begin{theorem}\label{exotic Janson Tysk thm}Suppose that the conditions of Theorem~\ref{JansonTysk thm}  are satisfied and $h$ in \eqref{exotic eq} is a locally Lipschitz continuous function on $(\bR_{(+)}^d)^{N+1}$ with a Lipschitz constant that grows at most polynomially. That is,   there exist $p\ge0$ and $L\ge0$ such that, for  $|\mb x_i|,|\mb y_i|\le m$, 
$$\big| h(\mb x_0,\dots,  \mb x_{N})-h(\mb y_0,\dots, \mb y_{N})\big|\le (1+m^p)L\sum_{i=0}^{N}|\mb x_i-\mb y_i|.
$$
 Then, letting
 $$ v_N(t,\mb x_0,\dots, \mb x_N,\mb x):=h(\mb x_0,\dots, \mb x_N)\quad \text{ for} \quad t\in[0,T], \mb x\in \bR^d_{(+)},
 $$ the following recursive scheme for functions $v_k:[t_k,t_{k+1}]\times (\bR^d_{(+)})^{k+1}\times \bR^d_{(+)}\to\bR$, for $ k=0,\dots, N-1,$ is well-defined.
\begin{itemize}
\item For $k=N-1, N-2,\dots, 0$, the function  $f_{k+1}(\bm x):= v_{k+1}(t_{k+1},\mb x_0,\dots, \mb x_k,\mb x,\mb x)$ is continuous in $\mb x$, and $(t,\mb x)\mapsto v_k(t,\mb x_0,\dots, \mb x_k,\mb x)$ is the solution of $(\text{TVP}^{(+)})$ with terminal condition  $f_{k+1}$ at   time $t_{k+1}$.
\end{itemize}
\end{theorem}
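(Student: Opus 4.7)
The plan is backward induction on $k=N-1,N-2,\ldots,0$, verifying at each step that $f_{k+1}$ meets the two hypotheses of Theorem~\ref{JansonTysk thm}---continuity and at most polynomial growth in $\mb x$---so that Theorem~\ref{JansonTysk thm} supplies $v_k$ as the unique solution of $(\text{TVP}^{(+)})$ within the polynomial-growth class. A convenient induction hypothesis at level $k+1$ is that $v_{k+1}$ is jointly continuous on its full domain and that, for every fixed choice of parameters $\mb x_0,\ldots,\mb x_k$, the map $(\mb y,\mb z)\mapsto v_{k+1}(t_{k+1},\mb x_0,\ldots,\mb x_k,\mb y,\mb z)$ has at most polynomial growth in $(\mb y,\mb z)$.

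For the base case $k=N-1$, we have $f_N(\mb x)=h(\mb x_0,\ldots,\mb x_{N-1},\mb x)$, which is continuous, and applying the local Lipschitz hypothesis on the ball of radius $\max_i|\mb x_i|\vee|\mb x|$ against a fixed reference point yields polynomial growth in $\mb x$ for each fixed $\mb x_0,\ldots,\mb x_{N-1}$; Theorem~\ref{JansonTysk thm} hence produces $v_{N-1}$. For the inductive step, $f_{k+1}(\mb x)=v_{k+1}(t_{k+1},\mb x_0,\ldots,\mb x_k,\mb x,\mb x)$ inherits continuity in $\mb x$ from the joint continuity of $v_{k+1}$ (where $\mb x$ enters both as the parameter $\mb x_{k+1}$ and as the spatial variable) and polynomial growth in $\mb x$ from the polynomial bound on $v_{k+1}$; Theorem~\ref{JansonTysk thm} then supplies $v_k$.

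The main obstacle is propagating the joint-continuity half of the induction hypothesis, i.e., showing that the new $v_k$ is itself jointly continuous in all its arguments---the subtle point being continuous dependence on the parameters $\mb x_0,\ldots,\mb x_k$, which enter only through the terminal condition $f_{k+1}$. I would exploit the linearity of the PDE: the difference $v_k^{\mb a}-v_k^{\mb b}$ corresponding to two parameter vectors $\mb a,\mb b\in(\bR^d_{(+)})^{k+1}$ solves the same equation with terminal datum $f_{k+1}^{\mb a}-f_{k+1}^{\mb b}$, which converges to zero uniformly on compacta as $\mb b\to \mb a$ by the induction hypothesis. A uniqueness/stability estimate within the polynomial-growth class---extracted from the maximum-principle argument in the proof of Theorem~A.14 of Janson--Tysk for the $(\text{TVP})$ case and obtained analogously for $(\text{TVP}^+)$---then yields continuous dependence of $v_k$ on the parameters, which together with the continuity of $v_k$ in $(t,\mb x)$ coming from Theorem~\ref{JansonTysk thm} closes the induction.
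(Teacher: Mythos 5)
Your overall strategy---backward induction, checking at each step that the terminal datum $f_{k+1}$ is continuous and of polynomial growth so that Theorem~\ref{JansonTysk thm} can be applied---is the same as the paper's, but the induction hypothesis you chose is too weak to carry out the two propagation steps, and this is where the genuine gaps lie. First, your argument for joint continuity of $v_k$ does not go through as described: knowing only that $f_{k+1}^{\mb a}-f_{k+1}^{\mb b}\to0$ \emph{uniformly on compacta} is not enough to conclude via the maximum principle that $v_k^{\mb a}-v_k^{\mb b}\to0$, because the comparison has to be made on the unbounded domain $\bR^d_{(+)}$. What a maximum-principle stability estimate in the polynomial-growth class actually requires is a \emph{global} bound of the form $|f_{k+1}^{\mb a}-f_{k+1}^{\mb b}|\le\delta\,(1+m^p+|\mb x|^p)$ with $\delta\to0$ as $\mb b\to\mb a$, i.e.\ smallness together with a polynomial envelope that is uniform in the parameter. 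Mere joint continuity of $v_{k+1}$ gives neither the envelope nor a quantified modulus, so the "stability estimate extracted from Janson--Tysk" is not available at your level of hypotheses. This is exactly why the theorem assumes $h$ locally Lipschitz with a polynomially growing Lipschitz constant (a hypothesis you use only for the base-case growth bound), and why the paper, instead of joint continuity, propagates the quantitative condition (iii): a parameter-Lipschitz estimate $|v_{k+1}(t,\mb x_0,\dots,\mb x_{k+1},\mb x)-v_{k+1}(t,\mb y_0,\dots,\mb y_{k+1},\mb x)|\le(1+m^p)L\sum_i|\mb x_i-\mb y_i|$, which it pushes through the PDE by comparing with the barrier $(1+m^p+u(t,\mb x))\delta$, where $u$ solves $(\text{TVP})$ with terminal condition $|\mb x|^p$ and satisfies $0\le u\le c|\mb x|^p$. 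Continuity of $f_{k+1}$ in the diagonal variable then follows from (ii) plus (iii), with no appeal to joint continuity at all.

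Second, your own induction hypothesis requires polynomial growth of $(\mb y,\mb z)\mapsto v_{k+1}(t_{k+1},\dots,\mb y,\mb z)$ \emph{jointly} in the last parameter and the spatial variable (you need it on the diagonal $\mb y=\mb z=\mb x$ to bound $f_{k+1}$), but you never verify this joint growth for the newly constructed $v_k$: Theorem~\ref{JansonTysk thm} only controls the growth in $\mb x$ for each \emph{fixed} parameter vector, and the constant may a priori blow up as $|\mb x_k|\to\infty$. The paper closes this by its condition (i), comparing $v_k$ with the solution $\wt v_k$ whose terminal datum is $c(|\mb x_0|^p+\cdots+|\mb x_k|^p+2|\mb x|^p)$ and invoking the maximum principle together with Theorem~A.7 of Janson--Tysk (and, in the $(\text{TVP}^+)$ case, Remark~\ref{growth remark} and the fact that the maximum principle is inherited from the transformed problem $(\wh{\text{TVP}})$---a point your "obtained analogously" also glosses over). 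Your approach could be repaired, but only by strengthening the induction hypothesis to include a parameter-uniform polynomial envelope and a quantitative modulus in the parameters, which essentially reconstructs the paper's conditions (i)--(iii).
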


The  condition on the local Lipschitz continuity  of $h$ in the preceding result can often be relaxed in more specific situations. Examples are the pathwise versions of the ($d$-dimensional) Bachelier and Black--Scholes models, which both correspond to the choice  $a_{ij}(t,\mb x)=\wt a_{ij}$ for a constant positive definite matrix $(\wt a_{ij})$. In these cases the recursive scheme in Theorem~\ref{exotic Janson Tysk thm} can be solved for large classes of payoff functions $h$ without requiring local Lipschitz continuity. As a matter of fact, even the continuity of $h$ can be relaxed so as to account for discontinuous payoffs as, e.g., in barrier options. This also applies to the strictly pathwise hedging argument that we are going to formulate next. However, these relaxations need case-by-case arguments. We therefore do not spell them out explicitly here and leave the details to the interested reader.

Now let $H$ be an exotic option as in \eqref{exotic eq} and suppose that the recursive scheme in Theorem~\ref{exotic Janson Tysk thm} holds for functions $v_k$, $k=0,\dots, N$. When denoting by $\nabla_{\mb x}v_k$ the gradient of the function $\mb x\mapsto v_k(t,\mb x_0,\dots,\mb x_k,\mb x)$, then
 \begin{equation}\label{exotic hedging strat eq}
 \begin{split}
\mb\xi^{\mb S}(t)&:=\nabla_{\mb x}v_k(t,\mb S(t_0),\dots,\mb S(t_k),\mb S(t)),\\
 \eta^{\mb S}(t)&:=v_k(t,\mb S(t_0),\dots,\mb S(t_k),\mb S(t))-\mb\xi^{\mb S}(t)\cdot\mb S(t),
\end{split}\qquad\qquad \text{for } t\in[t_k,t_{k+1}),
\end{equation}
is a self-financing trading strategy on each interval $[t_k,t_{k+1})$ in the sense that 
$$\mb\xi^{\mb S}(t)\cdot\mb S(t)+\eta^{\mb S}(t)= v_k(t,\mb S(t_0),\dots,\mb S(t_k),\mb S(t))=v_k(t_k,\mb S(t_0),\dots,\mb S(t_k),\mb S(t_k))+\int_{t_k}^t\mb\xi^{\mb S}(s)\ud \mb S(s).
$$
The continuity of $t\mapsto v_k(t,\mb S(t_0),\dots,\mb S(t_k),\mb S(t))$ implies the existence of the limit
$$\int_{t_k}^{t_{k+1}}\mb\xi^{\mb S}(s)\ud \mb S(s):=\lim_{t\uparrow t_{k+1}}\int_{t_k}^t\mb\xi^{\mb S}(s)\ud \mb S(s)
$$
and hence allows us to define 
\begin{equation}\label{Ito int extension eq}
\int_{0}^t\mb\xi^{\mb S}(s)\ud \mb S(s):=\sum_{k=0}^{\ell-1}\int_{t_k}^{t_{k+1}}\mb\xi^{\mb S}(s)\ud \mb S(s) +\int_{t_\ell}^t\mb\xi^{\mb S}(s)\ud \mb S(s), \qquad t\in[0,T],
\end{equation}
where $\ell$ is the largest $k$ such that $t_k<t$.
 With these conventions, we obtain the following Delta hedging result.

\begin{corollary}\label{exotic hedge cor}Let $H$ be an exotic option as in \eqref{exotic eq} and suppose that the recursive scheme in Theorem~\ref{exotic Janson Tysk thm} holds for functions $v_k$, $k=0,\dots, N$. Then, for each $\mb S\in\cS_a^{(+)}$,  the strategy \eqref{exotic hedging strat eq}
 is self-financing in the above sense and satisfies
 $$\lim_{t\uparrow T}\mb\xi^{\mb S}(t)\cdot\mb S(t)+\eta^{\mb S}(t)=v_0(0,\mb S(t_0))+\int_0^T\mb\xi^{\mb S}(s)\ud \mb S(s)=h(\mb S({t_0}),\dots, \mb S({t_N})).
 $$
 In this sense, $(\mb \xi^{\mb S},\eta^{\mb S})$ is a strictly pathwise Delta hedging strategy for $H$.
\end{corollary}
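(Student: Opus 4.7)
The plan is to apply Proposition~\ref{sf prop} separately on each subinterval $[t_k,t_{k+1})$ and then patch the resulting self-financing identities together using the consistency built into the recursive scheme of Theorem~\ref{exotic Janson Tysk thm}. Fix $\mb S\in\cS_a^{(+)}$ and, for each $k\in\{0,\dots,N-1\}$, define the auxiliary function $\wt v_k(t,\mb x):=v_k(t,\mb S(t_0),\dots,\mb S(t_k),\mb x)$ on $[t_k,t_{k+1}]\times\bR^d_{(+)}$. Since the past values $\mb S(t_0),\dots,\mb S(t_k)$ are frozen constants, the recursive scheme says that $\wt v_k\in C^{1,2}([t_k,t_{k+1})\times\bR^d_{(+)})\cap C([t_k,t_{k+1}]\times\bR^d_{(+)})$ and satisfies $\partial_t\wt v_k+\cL^{(+)}\wt v_k=0$ on $[t_k,t_{k+1})\times\bR^d_{(+)}$.

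For any $r\in(t_k,t_{k+1})$, Proposition~\ref{sf prop} applied to $\wt v_k$ on $[t_k,r]$ produces a basic admissible integrand, which, by the uniqueness part of that proposition, must coincide with $\nabla_{\mb x}\wt v_k(t,\mb S(t))=\mb\xi^{\mb S}(t)$ from \eqref{exotic hedging strat eq}. Hence
$$
\wt v_k(t,\mb S(t))=\wt v_k(t_k,\mb S(t_k))+\int_{t_k}^t\mb\xi^{\mb S}(s)\,\ud\mb S(s),\qquad t\in[t_k,t_{k+1}),
$$
which is precisely the self-financing identity on $[t_k,t_{k+1})$. The continuous extension of $\wt v_k$ to $t=t_{k+1}$ gives the existence of $\int_{t_k}^{t_{k+1}}\mb\xi^{\mb S}(s)\,\ud\mb S(s)$ as the limit described after \eqref{Ito int extension eq}, together with
$$
\wt v_k(t_{k+1},\mb S(t_{k+1}))=\wt v_k(t_k,\mb S(t_k))+\int_{t_k}^{t_{k+1}}\mb\xi^{\mb S}(s)\,\ud\mb S(s).
$$

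The key patching step is the observation that, by the definition of $f_{k+1}$ in the recursive scheme,
$$
\wt v_k(t_{k+1},\mb S(t_{k+1}))=f_{k+1}(\mb S(t_{k+1}))=v_{k+1}(t_{k+1},\mb S(t_0),\dots,\mb S(t_k),\mb S(t_{k+1}),\mb S(t_{k+1}))=\wt v_{k+1}(t_{k+1},\mb S(t_{k+1})).
$$
Thus the portfolio value $V^{\mb S}(t):=\mb\xi^{\mb S}(t)\cdot\mb S(t)+\eta^{\mb S}(t)$ is continuous across each fixing time $t_{k+1}$, and summing the single-interval identities telescopes to
$$
V^{\mb S}(t)=v_0(0,\mb S(t_0))+\int_0^t\mb\xi^{\mb S}(s)\,\ud\mb S(s),\qquad t\in[0,T),
$$
with the extended integral defined by \eqref{Ito int extension eq}. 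For the terminal value, continuity of $\wt v_{N-1}$ up to $t_N=T$ yields the existence of the limit $V^{\mb S}(T)=\lim_{t\uparrow T}V^{\mb S}(t)$, and the terminal condition $v_N(T,\mb x_0,\dots,\mb x_{N-1},\mb x,\mb x)=h(\mb x_0,\dots,\mb x_{N-1},\mb x)$ together with the patching identity at $t_N$ gives $V^{\mb S}(T)=h(\mb S(t_0),\dots,\mb S(t_N))$.

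I expect the main obstacle to be purely bookkeeping rather than mathematical: one must verify that $\mb\xi^{\mb S}$ restricted to each $[t_k,t_{k+1})$ is a basic admissible integrand for $\mb S$ in the sense of the definition, which is where freezing the past values $\mb S(t_0),\dots,\mb S(t_k)$ as constants, and noting that $\wt v_k$ is $C^{1,2}$ on the half-open strip, is essential (so that the \lq\lq auxiliary function $\mb A$\rq\rq\ in the definition of a basic admissible integrand can be taken trivial). Once this is in place, the application of Proposition~\ref{sf prop} and the telescoping argument proceed mechanically.
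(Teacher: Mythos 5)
Your proof is correct and follows essentially the same route as the paper, which establishes the corollary in the discussion preceding it: apply Proposition~\ref{sf prop} with the past fixings frozen on each interval $[t_k,t_{k+1})$, extend the integral to $t_{k+1}$ by continuity, and use the recursive consistency $f_{k+1}(\mb x)=v_{k+1}(t_{k+1},\mb x_0,\dots,\mb x_k,\mb x,\mb x)$ to telescope and invoke the terminal condition $v_N=h$. Your care in applying the proposition only on compact subintervals $[t_k,r]$, $r<t_{k+1}$, where $v_k$ is $C^{1,2}$, is exactly the right bookkeeping.
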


The preceding corollary establishes a general, strictly  pathwise hedging result for a large class of exotic options arising in practice. It also identifies $v_0(0,\mb S(t_0))$ as the amount of cash needed at $t=0$ so as to perfectly replicate the payoff $H$ for all price trajectories in $\cS_a^{(+)}$. In continuous-time finance, this amount is usually equated with an \emph{arbitrage-free 
price} for $H$. In our situation, however, the interpretation of $v_0(0,\mb S(t_0))$ as an arbitrage-free price lacks   an essential ingredient: We do not know whether our class of trading strategies is indeed arbitrage-free with respect to all possible price trajectories in $\cS_a^{(+)}$. This question will now be explored in the subsequent section. Our corresponding result, Theorem~\ref{no arb thm}, gives sufficient conditions under which trading strategies, as those in Corollary~\ref{exotic hedge cor}, do indeed not generate arbitrage in our pathwise framework. 
Theorem~\ref{no arb thm} will be the main result of this paper.

\begin{remark}[\bfseries Robustness of the hedging strategy]\label{robust remark}The strategy \eqref{exotic hedging strat eq}
 yields a perfect hedge for the exotic option $H$ only if the actually realized price trajectory, $ \mb S$, belongs to the set $\cS_a^{(+)}$. In reality, however, the realized quadratic variation is typically subject to uncertainty, and therefore it may turn out \emph{a posteriori} that $\mb S$ does actually not belong to $\cS_a^{(+)}$. If $\mb S$ nevertheless belongs to $QV^d$, one can then speak of volatility uncertainty. One possible approach to  volatility uncertainty was developed in~\citep{Lyons95}, where, for the case in which $H=h(\mb S(T))$,  the linear equation $(\text{TVP}^{+})$ is replaced by a certain nonlinear partial differential equation that corresponds to a worst-case approach within a class of price trajectories whose realized volatility may vary within a given set. A different approach to volatility uncertainty was proposed in~\citep{RobustBS} for the case $d=1$, in which we write $S$ instead of the vector notation $\mb S$.   Although~\citep{RobustBS} is set up in a diffusion framework, it is straightforward to translate the comparison result of~\citep[Theorem 6.2]{RobustBS} into a strictly pathwise framework. For options of the form $H=h(S(T))$ with $h\ge0$ convex, one then gets that the Delta hedge \eqref{exotic hedging strat eq} is  \emph{robust} in the sense that it is still a superhedge as long as $a$ overestimates the realized quadratic variation, i.e.,  $\int_r^ta(s,S(s))\ud s\ge \<S,S\>(t)-\<S,S\>(r)$ for $0\le r\le t\le T$. Thus, if a Delta hedging strategy is robust, then a trader
can monitor its performance by comparing $a(t,S(t))$ to the
realized quadratic variation $\<S,S\>$. In~\citep{SchiedStadje}, it was analyzed to what extent the preceding result can be extended to exotic payoffs of the form $H= h(  S({t_0}),\dots,  S({t_N}))$. It was shown that robustness then breaks down  for a large class of relevant convex payoff functions $h$, but that it still holds if $h$ is directionally convex. 
\end{remark}

\section{Absence of pathwise arbitrage}\label{arbitrage section}

We are now going to study the absence of pathwise arbitrage within a class of strategies that is suggested by the pathwise Delta hedging strategies constructed in Theorem~\ref{exotic Janson Tysk thm} and Corollary~\ref{exotic hedge cor}. We refer to the   paragraph   preceding Remark~\ref{robust remark} for a motivation of this problem. Let us first introduce the class of strategies we will consider.

\begin{definition}\label{strategies def}
Suppose that $N\in\bN$, $0=t_0<t_1<\cdots <t_N=t_{N+1}=T$, and $v_k$ $(k=0,\dots,N)$ are real-valued  continuous functions  on $[t_k,t_{k+1}]\times (\bR^d_{(+)})^{k+1}\times \bR^d_{(+)}$ such that, for $k=0,\dots, N-1$, the function    $(t,\mb x)\mapsto v_k(t,\mb x_0,\dots, \mb x_k,\mb x)$ is the solution of $(\text{TVP}^{(+)})$ with terminal condition  $f_{k+1}(\bm x):= v_{k+1}(t_{k+1},\mb x_0,\dots, \mb x_k,\mb x,\mb x)$ at   time $t_{k+1}$. For $\mb S\in\cS_a^{(+)}$, we then define $\mb\xi^{\mb S}$ as in \eqref{exotic hedging strat eq} and 
\begin{equation}
V^{\mb S}_{\mb\xi}(t):=v_0(0,\mb S(0))+\int_0^t\mb\xi^{\mb S}(s)\ud \mb S(s),
\end{equation}
where the pathwise It\^o integral is understood as in \eqref{Ito int extension eq}. By  $\cX^{(+)}$ we denote the collection of all pairs $(v_0(0,\cdot),\mb \xi^\cdot)$ that arise in this way.
\end{definition}

Theorem~\ref{exotic Janson Tysk thm} gives   sufficient conditions for the existence of a family of functions $(v_k)$ as in the preceding definition, but these conditions are not necessary. In particular, as mentioned above, the local Lipschitz continuity of the terminal function $v_N$ can be relaxed in many situations. 
We can now define our strictly pathwise  notion of an arbitrage strategy.

\begin{definition}\label{arbitrage def}  \textbf{((Admissible) arbitrage opportunity)} A pair $(v_0(0,\cdot),\bm\xi^\cdot)\in\cX^{(+)}$ is called an \emph{arbitrage opportunity} for $\cS_a^{(+)}$ if the following conditions hold.
\begin{enumerate}
\item $V^{\mb S}_{\mb\xi}(T)\ge0$ for all $\mb S\in\cS_a^{(+)}$.
\item There exists at least one $\mb S\in\cS_a^{(+)}$  for which $V^{\mb S}_{\mb\xi}(0)=v_0(0,\mb S(0))\le 0$ and $V^{\mb S}_{\mb\xi}(T_0)>0$ for some $T_0\in(0,T]$.
\end{enumerate}
An arbitrage opportunity $(v_0(0,\cdot),\bm\xi^\cdot)$ will be called \emph{admissible} if the following condition is also satisfied.
\begin{enumerate}
\setcounter{enumi}{2}
\item There exists a constant $c\ge 0$ such that $V^{\mb S}_{\mb\xi}(t)\ge-c$ for all $\mb S\in\cS_a^{(+)}$ and $t\in[0,T]$.
\end{enumerate}
\end{definition}

Let us comment on the preceding definition. Condition  (a) states that one can follow the strategy $(v_0(0,\cdot),\bm\xi^\cdot)$ up to time $T$ without running the risk of ending up with negative wealth at the terminal time. Now let $\mb S$ be as in condition (b). The initial spot value, $\mb S_0:=\mb S(0)$,  will then be such that $v_0(0,\mb S_0)=V^{\mb S}_{\mb\xi}(0)\le 0$. Hence, for any price trajectory $\wt{\mb S}\in\cS_a^{(+)}$ with $\wt{\mb S}(0)=\mb S_0$, only a  nonpositive initial investment $v_0(0,\mb S_0)=V^{\wt{\mb S},\mb\xi}(0)$ is required so as to end up with the nonnegative terminal wealth $V^{\wt{\mb S},\mb\xi}(T)\ge0$. Moreover, for the particular price trajectory $\mb S$, there exists a time $T_0$ at which one can make the strictly positive profit $V^{\mb S}_{\mb\xi}(T_0)>0$. This profit can be locked in, e.g.,  by halting all  trading from time $T_0$ onward. In this sense, the strategy  $(v_0(0,\cdot),\bm\xi^\cdot)$ is indeed an arbitrage opportunity. Condition (c) is a constraint on the strategy  $(v_0(0,\cdot),\bm\xi^\cdot)$ that is analogous to the   admissibility  constraint that is usually imposed in continuous-time probabilistic models so as to exclude  doubling-type strategies. Indeed, it follows, e.g., from Dudley's result~\citep{Dudley} that standard diffusion models typically admit arbitrage opportunities in the class of strategies whose value process is not bounded from below (see also the discussion in~\citep[Section 1.6.3]{JeanblancYorChesney}). In our pathwise setting, an example of an  arbitrage opportunity that does not satisfy condition (c) will be provided in Example~\ref{arb ex} below. First, however, let us state the main result of our paper.

\begin{theorem}[\bfseries{Absence of admissible arbitrage}]\label{no arb thm}
Suppose that $a(t,\mb x)$ is continuous, bounded, and positive definite for all $(t,\mb x)\in[0,\wt T]\times\bR_{(+)}^d$, where $\wt T>T$. Then there are no admissible arbitrage opportunities in $\cX^{(+)}$.
\end{theorem}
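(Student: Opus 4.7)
The plan is to argue by contradiction. Suppose $(v_0(0,\cdot), \bm\xi^\cdot) \in \cX^{(+)}$ is an admissible arbitrage, witnessed by a trajectory $\mb S^* \in \cS_a^{(+)}$ and a time $T_0 \in (0, T]$ with $V^{\mb S^*}_{\mb\xi}(T_0) > 0$. By the self-financing identity stated just before Corollary \ref{exotic hedge cor}, we have $V^{\mb S}_{\mb\xi}(t) = v_k(t, \mb S(t_0), \ldots, \mb S(t_k), \mb S(t))$ on $[t_k, t_{k+1})$, so condition (a) translates as $v_N(T, \mb S(t_0), \ldots, \mb S(t_N), \mb S(t_N)) \ge 0$ for every $\mb S \in \cS_a^{(+)}$, and the admissibility bound (c) gives $v_k \ge -c$ uniformly on each piece of the domain.

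The first (and hardest) step is a pathwise support lemma in the spirit of Stroock and Varadhan: for any prescribed sequence $(\mb x_0, \ldots, \mb x_N) \in (\bR^d_{(+)})^{N+1}$ there exists $\mb S \in \cS_a^{(+)}$ with $\mb S(t_k) = \mb x_k$ for all $k$. The construction I have in mind is deterministic: fix a single reference continuous function $\mb W$ whose pathwise quadratic variation along $(\bT_n)$ is the identity, and on each sub-interval $[t_k, t_{k+1}]$ build $\mb S$ as the strictly pathwise It\^o solution of an SDE of the form $\ud \mb S = \sigma(t,\mb S)\,\ud \mb W + \mb b(t)\,\ud t$, with $\sigma\sigma^\top = a$ (and the multiplicative analogue in the $\cS_a^+$ case), the drift $\mb b$ tuned so that $\mb S(t_{k+1}) = \mb x_{k+1}$. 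The uniform positive definiteness and boundedness of $a$ on $[0,\wt T]\times\bR^d_{(+)}$, together with the extra margin $\wt T>T$, is what supplies enough flexibility to choose $\mb b$. Combining this lemma with condition (a) gives $v_N(T, \mb x_0, \ldots, \mb x_N, \mb x_N) \ge 0$ for every $(\mb x_0, \ldots, \mb x_N) \in (\bR^d_{(+)})^{N+1}$, which is exactly the terminal datum $f_N$ of $v_{N-1}$.

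A backward induction on $k = N-1,\ldots,0$ using the weak parabolic minimum principle on each sub-interval $[t_k,t_{k+1}]$ (with the admissibility bound controlling the behaviour of $v_k$ at infinity), together with the gluing relation $f_{k+1}(\mb x) = v_{k+1}(t_{k+1},\mb x_0,\ldots,\mb x_k,\mb x,\mb x)$, then yields $v_k \ge 0$ on its entire domain for every $k$. To close the argument, set $\wt v_k(t,\mb x) := v_k(t,\mb S^*(t_0),\ldots,\mb S^*(t_k),\mb x)$. Each $\wt v_k$ solves \eqref{heat eq} on $[t_k,t_{k+1}]\times\bR^d_{(+)}$ and is nonnegative. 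Condition (b) gives $\wt v_0(0,\mb S^*(0)) = V^{\mb S^*}_{\mb\xi}(0) \le 0$, hence $\wt v_0(0,\mb S^*(0)) = 0$. Nirenberg's strong parabolic minimum principle, applied after time reversal to the backward equation, then forces $\wt v_0 \equiv 0$ on $[0,t_1]\times\bR^d_{(+)}$; the gluing identity evaluated at $\mb x = \mb S^*(t_1)$ gives $\wt v_1(t_1,\mb S^*(t_1)) = 0$, and iterating through $k=1,\ldots,N-1$ yields $\wt v_k \equiv 0$ on every sub-interval. Hence $V^{\mb S^*}_{\mb\xi}(T_0) = \wt v_\ell(T_0,\mb S^*(T_0)) = 0$ for the index $\ell$ with $T_0 \in [t_\ell,t_{\ell+1})$, contradicting $V^{\mb S^*}_{\mb\xi}(T_0) > 0$.

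The main obstacle is the pathwise support lemma in the second paragraph. Unlike the classical Stroock--Varadhan result, which is a measure-theoretic statement about the support of a diffusion law, here one needs a genuinely deterministic construction producing a \emph{single} continuous function with both the prescribed pathwise covariations along the chosen refining sequence $(\bT_n)$ and the prescribed interpolation values at the fixings $t_0,\ldots,t_N$. Verifying that the pathwise It\^o equation for $\mb S$ is well-posed, and that the drift $\mb b$ can be chosen to steer the path through all prescribed points while preserving the quadratic-variation condition defining $\cS_a^{(+)}$, is the technically delicate part of the proof.
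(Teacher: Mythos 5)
Your overall architecture (contradiction; show every $v_k\ge0$ by backward induction; then force $v\equiv 0$ through the point where the initial value vanishes and contradict $V^{\mb S^*}_{\mb\xi}(T_0)>0$) parallels the paper's, but the load-bearing step is exactly the one you leave open, and the fix you sketch would not go through. Both the translation of condition (a) into ``$v_N(T,\cdot)\ge0$ everywhere'' and your claim that admissibility gives $v_k\ge-c$ \emph{on the whole domain} require knowing that the vectors $(\mb S(t_0),\dots,\mb S(t_k),\mb S(t))$ realized by paths $\mb S\in\cS_a^{(+)}$ form a dense set; your route to this is the ``pathwise support lemma'', and the proposed deterministic construction is not available. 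F\"ollmer's pathwise calculus has no existence theory for equations $\ud\mb S=\sigma(t,\mb S)\,\ud\mb W+\mb b(t)\,\ud t$ with a state-dependent, merely continuous $\sigma=\sqrt a$ (Theorem~\ref{no arb thm} only assumes $a$ continuous, bounded, positive definite); a pathwise solution theory would require rough-path or Doss--Sussmann arguments, which need smoothness of $\sigma$ and, for $d>1$, extra structure (a L\'evy-area lift or commutativity), and even then one must verify that the resulting path has covariation $\int_0^t a_{ij}(s,\mb S(s))\,\ud s$ \emph{along the fixed sequence} $(\bT_n)$, since membership in $\cS_a^{(+)}$ is tied to that particular sequence. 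Finally, the assertion that a deterministic drift can be tuned so that the solution hits the exact targets $\mb x_{k+1}$, with $\sigma$ state-dependent and $d$-dimensional, is itself an unproved controllability statement. So the key lemma is missing, and with it the whole first half of the argument.

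The paper avoids any deterministic construction by using probability purely as a device to produce many elements of $\cS_a^{(+)}$ at once: well-posedness of the martingale problem for $\cL$ yields measures $\bP_{t,\mb x}$ whose sample paths lie in $\cS_a$ almost surely (with the exponential transform of Lemma~\ref{TVP transform lemma} handling the $\cS_a^+$ case), and the Stroock--Varadhan support theorem supplies the density you need; nonnegativity of the $v_k$ is then obtained from the stochastic representation with localization and Fatou, and the final contradiction comes from a second application of the support theorem. Your remaining steps --- the backward induction via a Phragm\'en--Lindel\"of-type minimum principle in the class of solutions bounded from below, and the endgame via Nirenberg's strong minimum principle after time reversal, propagated across the fixing dates through the gluing condition $f_{k+1}(\mb x)=v_{k+1}(t_{k+1},\mb x_0,\dots,\mb x_k,\mb x,\mb x)$ --- are a legitimate PDE-style substitute for the paper's probabilistic representation and second support-theorem argument (for $\cX^+$ you would first have to transform $\cL^+$, whose coefficients grow quadratically, into a bounded-coefficient operator before invoking these principles). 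But until the support/density statement is actually proved, the argument does not close; the natural repair is to import the paper's probabilistic construction rather than to build the paths by hand.
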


\begin{example}\label{arb ex} \textbf{(A non-admissible arbitrage opportunity)} Suppose that $d=1$ and $a\equiv 2$. Then the assumptions of Theorem~\ref{no arb thm} are clearly satisfied. Moreover, $\cL=\partial^2/\partial x^2$ and $(\text{TVP})$
is the  time-reversed Cauchy problem for the standard heat equation. There are many explicit examples of nonvanishing functions $v$ satisfying   $(\text{TVP})$ with terminal condition $f\equiv0$; see, e.g.,~\citep[Section II.6]{WidderHeatEquation}. By Widder's uniqueness theorem for nonnegative solutions of the heat equation,~\citep[Theorem VIII.2.2]{WidderHeatEquation},  any such function $v$ must be unbounded from above and from below on every nontrivial strip  $[t,T]\times\bR$ with $t<T$. In particular, there must be $0\le t_0<t_1<T$  and $x_0,x_1\in\bR$ such that $v(t_0,x_0)=0$ and $v(t_1,x_1)>0$. By means of a time shift, we can assume without loss of generality that $t_0=0$.  It can be shown easily that $\cS_a$ contains trajectories that can connect the two points $x_0$ and $x_1$ within time $t_1-t_0$, and so it follows that the function $v$ gives rise to an arbitrage opportunity.\end{example}

 \section{Extension to functionally dependent strategies }\label{Functional Section} 
 Recall from \eqref{exotic eq}  our representation $H= h(\mb S({t_0}),\dots, \mb S({t_N}))$ of the payoff of an exotic option, based on asset prices sampled at the $N+1$ dates $0=t_0<t_1<\cdots<t_N= T$. If $N$ is large, it may be convenient to use a continuous-time approximation of the payoff $H$. For instance, the payoff $H=(\frac1{N}\sum_{n=1}^NS^1_{t_n}-K)^+$ of an average-price Asian call option on the first asset, $S^1$, can be approximated by a call option based on a continuous-time average of asset prices, $H\approx (\frac1T\int_0^TS^1_t\,dt-K)^+$. Approximations of this type may be easier to treat analytically and are standard in the textbook literature.  In this section, we extend our preceding results to a situation that covers such continuous-time approximations of \eqref{exotic eq}.  That is, we will consider payoffs of the form $H(\mb S),$ where $\mb S$ describes the entire path of the underlying price trajectory up to time $T$, and $H$ is a suitable mapping from the Skorohod space $D([0,T],\bR^d)$ to $\bR.$    
   This will involve functional It\^o calculus as introduced  by~\cite{Dupire} and further developed by Cont and Fourni\'e~\cite{CF}. In the sequel, we will use the same notation as in~\cite{CF}. 
   
   For a $d$-dimensional c\`{a}dl\`{a}g path $\mb X$ in the Skorohod space  $D([0,T], \mathbb{R}^d)$ we write $\mb X(t)$  for the value of $\mb X$ at time $t$ and $\mb X_t=(\mb X(u))_{0\leq u\leq t}$ for the restriction of $\mb X$ to the interval $[0,t]$. Hence, $\mb X_t\in D([0,t], \mathbb{R}^d)$. We will work with non-anticipative functionals as defined in~\citep[Definition 1]{CF}, i.e., with a family $F=\left(F_t\right)_{t\in[0,T]}$ of maps 
$
F_t:D([0,t], \mathbb{R}^d)\mapsto\mathbb{R}.
$
For all further notation and relevant definitions, we refer to~\citep[Section 1]{CF}.
 
The functional It\^o formula (in the form of~\cite[Theorem 3]{CF}) yields that we can define  general admissible integrands $\mb\xi$ in the following way, so as to ensure that the pathwise  It\^o integral $\int_r^t\mb\xi(s)\ud\mb S(s)$ exists for all $t\in[r,u]\subset[0,T]$ as a finite limit of Riemann sums; see~\cite[p.~1051]{CF}.

 \begin{definition}[\bfseries General admissible integrands]
Suppose that  $0\le r<u\le T$,  $m\in\bN$,  $\mb V:[r,u]\to\bR^m$ is c\`adl\`ag and satisfies $\sup_{t\in[r,u]\textbackslash \mathbb{T}_n\cap [r,u]}\arrowvert \mb V(t)-\mb V(t-) \arrowvert\to 0$, and  $F$ is a non-anticipative functional in $ \mathbb{C}^{1,2}([r,u])$  (see~\citep[Definition 9 ]{CF})  
  such that the following regularity conditions are satisfied:
 \begin{itemize}
 \item[(a)]  $F$ depends in a predictable manner on its  second argument $\mb V$, i.e.,
 $$
 F_t(\mb X_t, \mb V_t)= F_t(\mb X_t, \mb V_{t-}),
 $$
 where  $\mb V_{t-}$ denotes the path defined on $[r,t]$ by
 $$
 \mb V_{t-}(s)=\mb V(s),\quad s\in[r,t),\quad \mb V_{t-}(t)=\mb V(t-),
 $$
 \item[(b)] $F,$ its vertical derivative $\nabla_{\mb x}F,$ and its second vertical derivative $\nabla^2_{\mb x}F$ belong to the class $\mathbb{F}^\infty_l$ (see~\citep[Definition  3]{CF}),  \item[(c)]   the horizontal derivative $\mathcal{D} F$ as well as the second vertical derivative $\nabla^2_{\mb x}F$ of $F$ satisfy the local boundedness condition~\citep[equation (9)]{CF}. 
 \end{itemize}
 Then  $$ \mb \xi(t)=\nabla_{\mb x}F_t(\mb S_{[r,u],t}, \mb V_t)$$
is called a  \emph{general admissible integrand} for $\mb S\in QV^d$. Here, $\mb S_{[r,u]}$ denotes the restriction of $\mb S$ to the interval $[r,u].$
  \label{Def general adm integrands} \end{definition}

In analogy to Proposition~\ref{sf prop}, we will now characterize self-financing trading strategies  that  may depend on the entire past evolution of  the  particular realization $S\in\cS_a^{(+)}$.  For an interval $I\subset [0,T]$ with nonempty interior, $\mathring{I}$, we denote by $\bC^{1,2}(I)$ the class of all non-anticipative functionals on $\bigcup_{t\in[a,b]} 
 D([a,t], 
 \bR^d_{(+)})$  that are horizontally differentiable and twice vertically differentiable on  $\mathring{I}$ and whose derivatives are continuous at fixed times and admit continuous extensions to $I$. 
Thus, lifting the second-order differential operators $\cL$ and $\cL^+$ yields the following operators on path space,
 $$\cA:=\frac12\sum_{i,j=1}^da_{ij}(t,\mb X(t))\nabla^2_{ij}\qquad \text{and}\qquad  \cA^+:=\frac12\sum_{i,j=1}^da_{ij}(t,\mb X(t))X_i(t)X_j(t)\nabla^2_{ij}.  $$
 The following proposition is a functional version of Proposition~\ref{sf prop}. 
  
  \begin{proposition}\label{sf prop functional}Suppose that $0\le r<u\le T$ and let  $F\in \mathbb{C}^{1,2}([r,u])$ 
   be a non-anticipative functional satisfying the conditions from Definition~\ref{Def general adm integrands}. 
 Then the following conditions are equivalent.
 \begin{enumerate}
 \item For each $\mb S\in\cS_a^{(+)}$, there exists a general admissible integrand $\bm\xi^{\mb S}$ on $[r,u]$ such that 
 $$F_t(\mb S_{[r,u],t})=F_r(\mb S_{[r,u],r})+\int_r^t\mb\xi^{\mb S}(s)\,\ud \mb S(s)\qquad\text{for $t\in[r,u]$.} 
 $$
 \item The functional $F$ satisfies the 
 path-dependent parabolic equation
 \begin{equation}
 \label{heat eq functional}
\mathcal{D} F+\cA^{(+)} F= 0\qquad\text{on $  \cS_a^{(+)}\big \arrowvert_{[r,u]}$.} 
\end{equation}  
 \end{enumerate}
 Moreover, if these equivalent conditions hold, then $\bm\xi^{\mb S}$ in {\rm(a)} must necessarily be of the form 
 \begin{equation}\label{xi ident eq functional}
 \bm\xi^{\mb S}(t)=\nabla_{\mb x}F_t(\mb S_{[r,u],t}).
 \end{equation}
  \end{proposition}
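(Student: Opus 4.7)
The strategy is to apply Cont and Fourni\'e's pathwise functional It\^o formula (Theorem~3 in~\cite{CF}) to $F$ along an arbitrary trajectory $\mb S\in\cS_a^{(+)}$. Since $F\in\mathbb{C}^{1,2}([r,u])$ and meets the regularity conditions inherited from Definition~\ref{Def general adm integrands}, that theorem applies and gives, for every $t\in[r,u]$,
\begin{equation*}
F_t(\mb S_{[r,u],t})-F_r(\mb S_{[r,u],r})=\int_r^t\mathcal{D}F_s(\mb S_{[r,u],s})\,\ud s+\int_r^t\nabla_{\mb x}F_s(\mb S_{[r,u],s})\,\ud\mb S(s)+\frac12\int_r^t\mathrm{tr}\bigl(\nabla^2_{\mb x}F_s(\mb S_{[r,u],s})\,\ud[\mb S](s)\bigr),
\end{equation*}
the second integral being a F\"ollmer integral interpreted as the limit of Riemann sums along $(\bT_n)$. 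Since $\mb S\in\cS_a^{(+)}$, we have $\ud\<S_i,S_j\>(s)=a_{ij}(s,\mb S(s))\,\ud s$ in the $\cS_a$ case and the analogous formula with the extra factor $S_i(s)S_j(s)$ in the $\cS_a^+$ case, so the trace integral rewrites as $\int_r^t\cA^{(+)}F_s(\mb S_{[r,u],s})\,\ud s$. Consequently,
\begin{equation*}
F_t(\mb S_{[r,u],t})-F_r(\mb S_{[r,u],r})=\int_r^t\bigl(\mathcal{D}F_s+\cA^{(+)}F_s\bigr)(\mb S_{[r,u],s})\,\ud s+\int_r^t\nabla_{\mb x}F_s(\mb S_{[r,u],s})\,\ud\mb S(s).
\end{equation*}

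The implication (b)$\Rightarrow$(a) is then immediate: plugging in $\mathcal{D}F+\cA^{(+)}F=0$ along $\mb S$ annihilates the drift, so $\mb\xi^{\mb S}(t):=\nabla_{\mb x}F_t(\mb S_{[r,u],t})$ realizes the desired representation. Moreover, this $\mb\xi^{\mb S}$ is a general admissible integrand in the sense of Definition~\ref{Def general adm integrands} (take $\mb V\equiv 0$ and $F$ itself as the generating functional, whose required regularity is part of our assumptions).

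For the converse (a)$\Rightarrow$(b), fix $\mb S\in\cS_a^{(+)}$. Combining the representation in (a) with the It\^o expansion above gives
\begin{equation*}
\int_r^t\bigl(\mb\xi^{\mb S}(s)-\nabla_{\mb x}F_s(\mb S_{[r,u],s})\bigr)\,\ud\mb S(s)=-\int_r^t\bigl(\mathcal{D}F_s+\cA^{(+)}F_s\bigr)(\mb S_{[r,u],s})\,\ud s,\qquad t\in[r,u].
\end{equation*}
The right-hand side is absolutely continuous in $t$ and thus of zero pathwise quadratic variation; computing the pathwise quadratic variation of the F\"ollmer integral on the left along $(\bT_n)$, using $\ud\<\mb S\>=a^{(+)}(s,\mb S(s))\,\ud s$ (with $a^{(+)}$ equal to $a$ or its $x_ix_j$-weighted version), yields $\int_r^t\bigl(\mb\xi^{\mb S}-\nabla_{\mb x}F\bigr)^{\!\top}a^{(+)}(s,\mb S(s))\bigl(\mb\xi^{\mb S}-\nabla_{\mb x}F\bigr)\,\ud s$. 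The positive definiteness of $a(s,\mb x)$ forces $\mb\xi^{\mb S}(s)=\nabla_{\mb x}F_s(\mb S_{[r,u],s})$ for Lebesgue-a.e.\ $s$, and by continuity of both sides for every $s$; feeding this equality back shows $\int_r^t(\mathcal{D}F+\cA^{(+)}F)(s,\mb S_{[r,u],s})\,\ud s=0$ for all $t$, and continuity of the integrand in $t$ yields $\mathcal{D}F+\cA^{(+)}F=0$ along $\mb S$. Since $\mb S\in\cS_a^{(+)}$ was arbitrary, (b) follows, together with the identification~\eqref{xi ident eq functional}. The main technical hurdle will be the rigorous verification of the quadratic-variation formula for F\"ollmer integrals of general admissible integrands against $\mb S\in\cS_a^{(+)}$; this will be obtained by applying the functional It\^o formula to the square of the integral and exploiting $\ud\<\mb S\>=a^{(+)}(s,\mb S(s))\,\ud s$, but careful bookkeeping of the $\nabla^2$-terms is required.
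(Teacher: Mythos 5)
Your proposal is correct and follows essentially the same route as the paper: apply the functional It\^o formula of Cont and Fourni\'e, get (b)$\Rightarrow$(a) and the identification immediately, and for (a)$\Rightarrow$(b) compare quadratic variations and use positive definiteness of $a^{(+)}$, exactly as in the paper's Proposition~\ref{sf prop} argument. The one step you flag as a ``technical hurdle''---the quadratic-variation formula for the F\"ollmer integral of a general admissible integrand---is not re-derived in the paper either; it is simply quoted from \citep[Proposition 2.1]{SchiedVoloshchenkoAssociativity}, so your sketch of proving it via It\^o applied to the square is an acceptable (if unneeded) substitute.
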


Now suppose that for  suitably given  $H:D([0,T],\mathbb{R}^d_{(+)})\to \mathbb{R}$ 
there exists a solution $F$ to the following 
path-dependent terminal-value problem,
\begin{equation*}
\begin{cases}F\in\mathbb{C}^{1,2}([0,T))&\text{satisfies the conditions  from Definition~\ref{Def general adm integrands}},\\ \mathcal{D} F+\cA^{(+)} F=0&\text{in $\bigcup_{t\in[0,T)}D([0,t], \bR^d_{(+)}),$}\\
F_T(\mb X_T)=H(\mb X_T)& \text{for $\mb X_T\in D([0,T],\bR^d_{(+)})$.}
\end{cases}\leqno{( \text{FTVP}^{(+)})}
\end{equation*}
Note that the terminal condition $H$ has to be defined on the Skorohod space $D([0,T],\bR^d_{(+)})$ as opposed to $C([0,T],\bR^d_{(+)})$, because we need to take its vertical derivatives, which requires applying discontinuous shocks.

Then, for $\mb S\in \cS_a^{(+)}$ and $t\in[0,T)$, we can define  
\begin{equation}\label{hedging strat eq functional}
\mb\xi^{\mb S}(t):=\nabla_{\mb x}F_t(\mb S_t)\qquad\text{and}\qquad \eta^{\mb S}(t):=F_t(\mb S_t)-\mb\xi^{\mb S}(t)\cdot\mb S(t).
\end{equation}
Proposition~\ref{sf prop functional}  gives 
 \begin{equation}
 \begin{split}
\mb\xi^{\mb S}(t)\cdot\mb S(t)+\eta^{\mb S}(t)= F_t(\mb S_t)=F_0(\mb S_0)+\int_0^t\mb\xi^{\mb S}(s)\ud \mb S(s),
 \end{split}
 \end{equation}
whence we infer that  $(\mb\xi^{\mb S},\eta^{\mb S})$ is a self-financing trading strategy with portfolio value $V^{\mb S}(t)=F_t(\mb S_t)$.  Since the functional $F$ is left-continuous on $[0,T]$ and $\mb S$ is continuous, the limit $V^{\mb S}(T):=\lim_{t\uparrow T}V^{\mb S}(t)$ exists and satisfies 
$$V^{\mb S}(T)=H(\mb S)\qquad\text{for all $\mb S\in\cS_a^{(+)}$.}
$$
 Thus, $(\mb\xi^{\mb S},\eta^{\mb S})$ is a strictly pathwise hedging strategy for the derivative with payoff $H=H(\mb S).$ 
\ignore{
\begin{corollary}\label{hedge cor functional}Let $H=H(\mb S_T)$ be an exotic path-dependent option with a suitably given $H:C([0,T],\mathbb{R}^d_{(+)})\to \mathbb{R}$ and suppose that  there exists a solution $F$ to the terminal-value problem $( \text{FTVP}^{(+)})$. Then, for each $\mb S\in\cS_a^{(+)}$,  the strategy \eqref{hedging strat eq functional}
 is self-financing  and satisfies
 $$\lim_{t\uparrow T}\mb\xi^{\mb S}(t)\cdot\mb S(t)+\eta^{\mb S}(t)=F_0(\mb S_0)+\int_0^t\mb\xi^{\mb S}(s)\ud \mb S(s)=H(\mb S_T).
 $$
Thus, $(\mb \xi^{\mb S},\eta^{\mb S})$ is a strictly pathwise Delta hedging strategy for $H$.
\end{corollary}}

In the next step, we will explore conditions yielding the existence and uniqueness of solutions to $(\text{FTVP})$ and $(\text{FTVP}^{+})$. 
 Path-dependent PDEs such as \eqref{heat eq functional} are closely related to 
backward stochastic differential equations (BSDEs) generalizing the (functional) Feynman-Kac formula~\citep{Dupire}. In~\citep{Peng2011},  a one-to-one correspondence between a functional BSDE and a path-dependent PDE is established  for the Brownian case. 
This was then generalized  in~\citep{Ji2013} to the case of solutions to stochastic differential equations with functionally dependent drift and diffusion coefficients.  
We will now use~\citep[Theorem 20]{Ji2013} to formulate conditions such that  $(\text{FTVP})$ and $(\text{FTVP}^{+})$ admit unique solutions.
To this end, we will need the following regularity conditions from~\citep[Definition 3.1]{Peng2011}.
\begin{definition} 
  The functional $H:D([0,T],\bR^d) \mapsto \mathbb{R}$ on the Skorohod space $D([0,T],\bR^d)$ is of class $C^2(D([0,T],\bR^d))$ 
    if for all $\mb X\in D([0,T],\bR^d)$ and $t\in [0,T],$ there exist $\mb p_1\in \bR^d$ and $\mb p_2\in\bR^d\times\bR^d$ so that $\mb p_2$ is symmetric and the following holds
   $$
H(\mb X_{\mb X_t^h})-H(\mb X)=\mb p_1\cdot \mb h + \frac{1}{2}  \mb h^\top \mb p_2\mb h   + o(\arrowvert \mb h \arrowvert^2),\quad \mb h\in\bR^d, 
$$
where $\mb X_{\mb X_t^h}(u):=\mb X(u)I_{[0,t)}(u) + (\mb X(u) + \mb h)I_{[t,T]}(u). $ We denote $H'_{\mb X_t}(\mb X):=\mb p_1$ and $H''_{\mb X_t}(\mb X):=\mb p_2$.
Moreover, $H:D([0,T],\bR^d) \mapsto \mathbb{R}$ is of class $C_{l,lip}^2(D([0,T],\bR^d))$ if $H'_{\mb X_t}(\mb X)$ and $H''_{\mb X_t}(\mb X)$ exist for all $\mb X\in D([0,T],\bR_{(+)}^d)$ and $ t\in[0,T]$, and if there are constants $C,k>0$ 
  such that for all $\mb X,\mb Y\in D([0,T],\bR_{(+)}^d)$ (with $\Arrowvert\cdot\Arrowvert$ denoting the supremum norm),
  $$
  \arrowvert H(\mb X)-H(\mb Y)\arrowvert\le C(1+\Arrowvert\mb X\Arrowvert^k + \Arrowvert\mb Y\Arrowvert^k)\Arrowvert\mb X - \mb Y\Arrowvert,
    $$
  $$
   \arrowvert H'_{\mb X_t}(\mb X)-H'_{\mb Y_s}(\mb Y)\arrowvert\le C(1+\Arrowvert\mb X\Arrowvert^k + \Arrowvert\mb Y\Arrowvert^k)(\arrowvert t-s\arrowvert + \Arrowvert\mb X - \mb Y\Arrowvert),\; t,s\in [0,T]
  $$
   $$
   \arrowvert H''_{\mb X_t}(\mb X)-H''_{\mb Y_s}(\mb Y)\arrowvert\le C(1+\Arrowvert\mb X\Arrowvert^k + \Arrowvert\mb Y\Arrowvert^k)(\arrowvert t-s\arrowvert + \Arrowvert\mb X - \mb Y\Arrowvert),\; t,s\in [0,T].
$$
\end{definition}
  
  \begin{theorem}\label{JansonTysk thm F} Suppose that the terminal condition $H$ of $(\text{\rm FTVP}^{(+)})$ is of class $C^2_{l,lip}(D([0,T],\bR_{(+)}^d))$.    Then, under the following conditions, $(\text{\rm FTVP}^{(+)})$  admits a unique solution $F\in \bC^{1,2}([0,T))$.
      \begin{enumerate}
  \item  {\bf  (Theorem 20 in~\citep{Ji2013})} In case of $(\text{\rm FTVP})$, we suppose  that $a(t,\mb X(t))=\sigma(t,\mb X(t))\sigma(t,\mb X(t))^\top$ 
   with a Lipschitz continuous volatility matrix $\sigma.$ 
  \item In case of $(\text{\rm FTVP}^+)$,  we suppose  that $a(t,\mb X(t))=\sigma(t,\mb X(t))\sigma(t,\mb X(t))^\top$  with a Lipschitz continuous volatility matrix $\sigma$ such that 
  $a_{ii}(t,\mb X(t))$ is also Lipschitz continuous.
  \end{enumerate}
\end{theorem}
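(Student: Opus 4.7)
Part~(a) is essentially a direct appeal to~\citep[Theorem~20]{Ji2013}. Under the stated hypotheses, the SDE $d\mb X(t)=\sigma(t,\mb X_t)\,\ud \mb B(t)$ driven by a Brownian motion $\mb B$ admits a unique strong solution, and Ji and Yang construct the unique $\bC^{1,2}([0,T))$-solution of their path-dependent PDE via the functional Feynman--Kac representation $F_t(\mb x_t)=\bE[H(\mb X_T)\mid \mb X_t=\mb x_t]$. What remains to be checked on our side is that their $\bC^{1,2}$-framework coincides with the Cont--Fourni\'e one used in Definition~\ref{Def general adm integrands}, including the local boundedness conditions on $\mathcal{D} F$ and $\nabla^2_{\mb x}F$; this is a routine consistency check built into Ji and Yang's construction. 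Uniqueness within our class of solutions then follows from the self-financing identity of Proposition~\ref{sf prop functional}: if $F^{(1)},F^{(2)}$ are two $\bC^{1,2}([0,T))$-solutions with the same terminal condition $H$, their difference is a pathwise It\^o integral that vanishes at $T$, and by evaluating along the sample paths of $\mb X$ and taking expectations one obtains $F^{(1)}\equiv F^{(2)}$ on $\cS_a\arrowvert_{[0,T]}$.

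For part~(b) I reduce to~(a) by a componentwise logarithmic change of path-variables. Set $\Phi(\mb x):=(\log x_1,\dots,\log x_d)$ for $\mb x\in\bR_+^d$ and associate to each non-anticipative functional $F$ on paths in $\bR_+^d$ the transform $\wt F_t(\mb Y_t):=F_t(e^{\mb Y_t})$ on paths in $\bR^d$. By the pathwise functional It\^o formula applied componentwise, $\mb X\in\cS_a^+$ yields $\mb Y:=\Phi(\mb X)$ with $\<Y_i,Y_j\>(t)=\int_0^t a_{ij}(s,e^{\mb Y(s)})\,\ud s$, and a direct computation of the horizontal and vertical derivatives of $\wt F$ shows that $F$ solves $(\text{FTVP}^+)$ if and only if $\wt F$ satisfies
\[
\mathcal{D}\wt F-\tfrac12\sum_{i=1}^d\wt a_{ii}(t,\mb Y(t))\nabla_{i}\wt F+\tfrac12\sum_{i,j=1}^d\wt a_{ij}(t,\mb Y(t))\nabla^2_{ij}\wt F=0
\]
with $\wt a(t,\mb Y):=a(t,e^{\mb Y})$ and terminal condition $\wt H(\mb Y):=H(e^{\mb Y})$. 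This is a problem of the same type as in~(a), now with an additional local drift $-\tfrac12(\wt a_{11},\dots,\wt a_{dd})$; the extra hypothesis that $a_{ii}$ be Lipschitz is precisely what is needed to make this drift fit the Ji--Yang framework. Furthermore, $\wt H\in C^2_{l,lip}(D([0,T],\bR^d))$ follows from $H\in C^2_{l,lip}(D([0,T],\bR^d_+))$ together with the smoothness of $\exp$ on compact subsets.

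The main obstacle is that the transformed coefficients $\wt\sigma(t,\mb Y):=\sigma(t,e^{\mb Y})$ and $\wt a_{ii}(t,\mb Y)$ are only \emph{locally} Lipschitz in $\mb Y$, whereas~\citep[Theorem~20]{Ji2013} nominally requires global Lipschitzness. I will handle this by a standard localization argument. For each $R>0$, truncate $\sigma$ and $a_{ii}$ to globally Lipschitz, bounded functionals that coincide with the originals on the tube $\{\mb x\in\bR_+^d:R^{-1}\le x_i\le R\text{ for all }i\}$; apply the version of~(a) adapted to non-vanishing drift to obtain a unique solution $\wt F^R$ of the truncated PDE; and then invoke an exit-time/Gronwall estimate for the driving SDE of $\mb Y$, using the uniform boundedness of the truncated $\sigma$ to control exponential moments, to show that for any fixed starting path the $\wt F^R$ stabilize as $R\to\infty$. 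Undoing the log transform via $F_t(\mb X_t):=\wt F_t(\Phi(\mb X_t))$ then produces the desired unique solution of $(\text{FTVP}^+)$, with uniqueness in our class again inherited from the martingale identity of Proposition~\ref{sf prop functional}, exactly as in~(a).
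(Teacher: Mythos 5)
Your part~(a) and the exponential change of variables in part~(b) are exactly the paper's route: the paper's Lemma~\ref{TVP transform lemma functional} performs the same componentwise log transform and produces the same transformed equation, with diffusion $\wt a_{ij}(t,\mb x)=a_{ij}(t,\exp(\mb x))$ and drift $\wt b_i=-\frac12 a_{ii}(t,\exp(\mb x))$ (this is precisely why the hypothesis that $a_{ii}$ be Lipschitz appears in part~(b)). Where you diverge is the final step: the paper reads its hypotheses as making the transformed coefficients $\wt\sigma=\sigma(t,\exp(\cdot))$ and $\wt b_i$ fit~\citep[Theorem~20]{Ji2013} directly and applies that theorem once, with no truncation; you instead declare the transformed coefficients only locally Lipschitz and propose a localization in $R$.

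It is at that added localization step that your argument has genuine gaps. First, the truncated solutions $\wt F^R$ do \emph{not} ``stabilize'' as $R\to\infty$, not even on paths confined to the tube: the Feynman--Kac value of $\wt F^R$ at such a path depends on the behaviour of the truncated diffusion after it exits the tube, so different truncations give genuinely different functionals everywhere, and at best one can hope for pointwise convergence under moment/uniform-integrability estimates that you have not supplied (the ``uniform boundedness of the truncated $\sigma$'' you invoke is unavailable uniformly in $R$ unless $\sigma$ itself is bounded). Second, even granting pointwise convergence, applying Ji--Yang to each truncated problem gives no control on the horizontal and vertical derivatives of the limit, so membership of the limit in $\bC^{1,2}([0,T))$ and the validity of the untruncated path-dependent PDE do not follow; a separate regularity/verification argument would be needed and none is sketched. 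Third, uniqueness for the untruncated problem is not obtained: your appeal to Proposition~\ref{sf prop functional} only identifies two candidate solutions along continuous paths in $\cS_a^{(+)}$ (equivalently, on the support of the diffusion), not on the Skorohod-space domain $\bigcup_t D([0,t],\bR^d_{(+)})$ on which the functionals and their vertical derivatives are defined, whereas the theorem's uniqueness assertion is inherited from Ji--Yang's uniqueness for the transformed problem. So, as written, the proposal is incomplete exactly at the point where it departs from the paper's proof; either justify Lipschitzness of the transformed coefficients as the paper's hypotheses intend, or supply the convergence, regularity, and uniqueness arguments that the truncation scheme requires.
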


\begin{remark} 
Note that analogous conditions on the covariance, respectively, volatility structure, can also be formulated for the case where these quantities are path-dependent, thanks to~\citep[Theorem 20]{Ji2013}. However, for the purpose of this paper, which is establishing conditions on the covariance of the underlying under which no admissible arbitrage opportunities exist, we must stick to the choice of Markovian volatility 
in order to be able to apply a support theorem later on. 
\end{remark}
As above, the quantity $F_0(\mb S_0)$ can be identified as the amount of cash needed at $t=0$ so as to perfectly replicate the payoff $H$. 
But  in order to interpret $F_0(\mb S_0)$ as an arbitrage-free price, we have to know whether our class of trading strategies is indeed arbitrage-free. 
Below we will formulate Theorem~\ref{no arb thm functional}, which  is a  functional analogue of Theorem~\ref{no arb thm}. 

\begin{definition}\label{strategies def functional}
Suppose that the non-anticipative functional $F$ satisfying the conditions  from Definition~\ref{Def general adm integrands} is the solution of the  path-dependent heat equation  \eqref{heat eq functional} 
$$\mathcal{D} F+\cA^{(+)} F= 0\qquad\text{on } \bigcup_{t\in [0,T)}C([0,t], \bR^d_{(+)}).
$$ 
For $\mb S\in\cS_a^{(+)}$, we then define $\mb\xi^{\mb S}$ as in \eqref{xi ident eq functional} (on $[0,T)$) and 
\begin{equation}
V^{\mb S}_{\mb\xi}(t):=F_0(\mb S_0)+\int_0^t\mb\xi^{\mb S}(s)\ud \mb S(s).
\end{equation}
By  $\cY^{(+)}$ we denote the collection of all pairs $(F_0(\cdot),\mb \xi^\cdot)$ that arise in this way.
\end{definition}
The notion of an (admissible) arbitrage opportunity for $\cS_a^{(+)}$ in the functional setting is defined in analogy to Definition~\ref{arbitrage def}; we only have to replace $\cX^{(+)}$ by $\cY^{(+)}$.
\begin{theorem}[\bfseries Absence of admissible arbitrage]\label{no arb thm functional}
Suppose that $a(t,\mb X(t))$ is  continuous, bounded, and positive definite for all $(t,\mb X(t))\in[0,\wt T]\times\bR_{(+)}^d$, where $\wt T>T$. Then there are no admissible arbitrage opportunities in $\cY^{(+)}$. 
\end{theorem}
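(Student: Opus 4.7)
The statement is the functional analogue of Theorem~\ref{no arb thm}, and my plan is to mimic the Stroock--Varadhan-style probabilistic maximum-principle argument alluded to in the introduction, with the classical It\^o formula replaced by the Cont--Fourni\'e functional It\^o formula and the parabolic PDE replaced by the path-dependent PDE~\eqref{heat eq functional}. I argue by contradiction: if an admissible arbitrage $(F_0(\cdot),\mb\xi^\cdot)\in\cY^{(+)}$ existed, then by the functional analogue of Definition~\ref{arbitrage def} I could fix $\mb S^*\in\cS_a^{(+)}$ and $T_0\in(0,T]$ with $F_0(\mb S^*(0))\le 0$, $F_{T_0}(\mb S^*_{T_0})>0$, and a constant $c\ge 0$ satisfying $F_t(\mb S_t)\ge -c$ for all $\mb S\in\cS_a^{(+)}$ and $t\in[0,T]$. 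Write $\mb x_0:=\mb S^*(0)$.

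First I would introduce an auxiliary diffusion $\mb X$ starting at $\mb x_0$ whose covariation matches~\eqref{qv class strat}. Choose a continuous bounded matrix square root $\sigma$ of $a$, which is invertible because $a$ is positive definite, and solve the martingale problem for $d\mb X=\sigma(t,\mb X)\,\ud\mb W$ in the $\cS_a$ case and the multiplicative analogue $dX_i=X_i\sum_j\sigma_{ij}(t,\mb X)\,\ud W_j$ in the $\cS_a^+$ case. Existence follows from Stroock--Varadhan martingale-problem theory since $a$ is continuous, bounded and uniformly elliptic on $[0,\widetilde T]\times\bR^d_{(+)}$, and almost every sample path of $\mb X$ lies in $\cS_a^{(+)}$, so that~\eqref{heat eq functional} holds along it. The Cont--Fourni\'e functional It\^o formula combined with~\eqref{heat eq functional} reduces $M_t:=F_t(\mb X_t)$ to
$$
M_t=F_0(\mb x_0)+\int_0^t\nabla_{\mb x}F_s(\mb X_s)\cdot\sigma(s,\mb X(s))\,\ud\mb W(s),
$$
a local martingale, which the admissibility bound $M_t\ge -c$ upgrades to a supermartingale.

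Second, arbitrage conditions (a) and (b) yield $F_T(\mb X_T)\ge 0$ almost surely together with $0\le\bE[F_T(\mb X_T)]\le F_0(\mb x_0)\le 0$, forcing $F_0(\mb x_0)=0$ and $F_T(\mb X_T)=0$ almost surely. A supermartingale bounded below with $\bE[M_0]=\bE[M_T]$ is automatically a martingale, whence $M_t=\bE[M_T\mid\cF_t]=0$ almost surely for every $t\in[0,T]$; in particular $F_{T_0}(\mb X_{T_0})=0$ almost surely. I then apply the Stroock--Varadhan support theorem: because $\sigma$ is continuous and uniformly non-degenerate (the multiplicative case reduces to the additive one via the log-transform $\mb Y=\log\mb X$, whose drift and diffusion coefficients remain continuous and bounded), the sup-norm topological support of the law of $\mb X$ equals the set of all continuous paths on $[0,T]$ starting at $\mb x_0$ and taking values in $\bR^d_{(+)}$. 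Since $\mb S^*$ lies in this support, every sup-norm neighbourhood of $\mb S^*_{T_0}$ carries positive $\mb X_{T_0}$-mass. The continuity of $F_{T_0}$ at fixed times (built into $\bC^{1,2}([0,T))$ and to Definition~\ref{Def general adm integrands}, which when restricted to continuous paths means sup-norm continuity) then propagates $F_{T_0}(\mb X_{T_0})=0$ a.s.\ to $F_{T_0}(\mb S^*_{T_0})=0$, contradicting condition (b).

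I expect the main obstacle to lie in verifying that the regularity built into $\bC^{1,2}([0,T))$ together with Definition~\ref{Def general adm integrands} is precisely strong enough to (i) justify applying the functional It\^o formula along the diffusion sample paths despite the PDE being stated only on $\cS_a^{(+)}$, and (ii) run the sup-norm density argument at the fixed time $T_0$. A secondary delicate point is making the log-transform rigorous in the $\cS_a^+$ case so that Stroock--Varadhan indeed returns the full set of continuous positive paths from $\mb x_0$ as topological support; this requires care because the transformed drift $-\tfrac12 a_{ii}(t,e^{\mb Y})$ and diffusion $\sigma(t,e^{\mb Y})$ must retain the continuity and boundedness needed for the support theorem, which follows from the assumptions on $a$.
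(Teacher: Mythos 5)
Your proposal is correct and follows the same overall strategy as the paper -- run a Stroock--Varadhan diffusion with generator $\cL^{(+)}$ from the starting point of the arbitrage path, turn $F_t(\mb X_t)$ into a (local) martingale via the functional It\^o formula and the path-dependent PDE, and contradict $F_{T_0}(\mb S^*_{T_0})>0$ through the support theorem plus continuity of $F_{T_0}$ in sup norm, with the $\cS_a^+$ case handled by the log/exponential transform. Where you genuinely diverge is in the middle of the argument: the paper first proves that $F\ge 0$ on \emph{all} of $[0,T]\times C([0,T],\bR^d)$ (using the density of $\cS_a$-paths from the support theorem, exit times $\tau_n$ from bounded domains to make the stochastic integral a true martingale, and Fatou), and only then localizes a second time at $T_0$ to get $0=F_0\ge\bE[F_{T_0}(\mb X_{T_0})]\ge 0$; you instead exploit admissibility directly -- a local martingale bounded below by $-c$ is a supermartingale, conditions (a) and (b) force $\bE[M_T]=\bE[M_0]=0$, and a supermartingale with constant expectation is a martingale, so $M_{T_0}=\bE[M_T\mid\cF_{T_0}]=0$ a.s. This buys you a shorter argument that never needs the global nonnegativity of $F$ at intermediate times nor the explicit exit-time localization; what it costs is that you lean a bit harder on identifying the pathwise F\"ollmer integral along a.e.\ sample path with the It\^o stochastic integral and on the local boundedness of $\nabla_{\mb x}F$ coming from the $\bF^\infty_l$ condition in Definition~\ref{Def general adm integrands} (the paper's boundedness on the closure of $Q_n$ plays this role). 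Two minor remarks: you do not actually need a square root $\sigma$ or a Brownian representation -- the paper works with the martingale problem for $\cL$ directly, under which the coordinates $X_i$ are local martingales, which avoids any discussion of strong versus weak solutions (though continuity of $a\mapsto a^{1/2}$ on positive definite matrices does make your route workable); and you should say explicitly, as the paper does, that $\bP_{0,\mb x_0}$-a.e.\ sample path lies in $\cS_a^{(+)}$ (quadratic variation along the fixed sequence $(\bT_n)$), since this is what lets you invoke conditions (a) and (c) almost surely along the diffusion.
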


\section{Proofs}\label{Proofs section}

\subsection{Proofs of the results from Sections~\ref{pathwise hedging  section} and~\ref{arbitrage section}}
\begin{proof}[Proof of Proposition~\ref{most elementary Ito formula prop}] 
We first consider the case $i=j$. Then,
\begin{align*}
\mb\xi^{ii}(s)\cdot(\mb S(s')-\mb S(s))&=2(S_i(s)-K_{ii})(S_i(s')-S_i(s))\\
&=(S_i(s')-K_{ii})^2-(S_i(s)-K_{ii})^2-(S_i(s')-S_i(s))^2.
\end{align*}
Summing over $s\in\bT_n$ yields
\begin{equation}\label{quadr discrete Ito}
\sum_{s\in\bT_n,\,s\le  t}\mb\xi^{ii}(s)\cdot(\mb S(s')-\mb S(s))=( S_i(t_n)-K_{ii})^2-( S_i(0)-K_{ii})^2-\sum_{s\in\bT_n,\,s\le  t}(S_i(s')-S_i(s))^2,
\end{equation}
where $t_n=\max\{s'\,|\,s\in\bT_n,\, s\le t\}\searrow t$ as $n\ua\infty$. Clearly, the limit of the left-hand side exists if and only if the limit of the right-hand side exists, which implies the result for $i=j$. In case $i\neq j$,  the result follows just as above by using the already established existence of $\<S_k,S_k\>(t)$ for all $k$ and $t$ and by noting that $\sum_{k,\ell\in\{i,j\}}\<S_k,S_\ell\>=\<S_i+S_j,S_i+S_j\>$.\end{proof}

\begin{proof}[Proof of Proposition~\ref{sf prop}]
The pathwise It\^o formula yields that for $\mb S\in\cS_a^{(+)}$,
\begin{equation}\label{pathwise Ito in sf prop eq}
v(t,\mb S(t))=v(r,\mb S(r))+\int_r^t\nabla_{\mb x} v(s,\mb S(s))\ud \mb S(s)+\int_r^t \Big(\frac{\partial }{\partial t}v(s,\mb S(s))+\cL^{(+)}v(s,\mb S(s))\Big)\ud s.
\end{equation}
This immediately yields that (b) implies (a) and that \eqref{xi ident eq}
 must hold. 
 
 Let us now assume that (a) holds. Then 
 $$\int_r^t\big(\mb \xi^{\mb S}(s)-\nabla_{\mb x} v(s,\mb S(s))\big)\ud \mb S(s)=\int_r^t \Big(\frac{\partial }{\partial t}v(s,\mb S(s))+\cL^{(+)}v(s,\mb S(s))\Big)\ud s.
 $$
 Since the right-hand side has zero quadratic variation~\citep[Proposition 2.2.2]{Sondermann}, the same must be true of the left-hand side. By~\citep[Proposition 12]{SchiedCPPI}, the quadratic variation of the left-hand side is given by 
 $$\int_r^t\big(\mb \xi^{\mb S}(s)-\nabla_{\mb x} v(s,\mb S(s))\big)^\top a(s,\mb S(s))\big(\mb \xi^{\mb S}(s)-\nabla_{\mb x} v(s,\mb S(s))\big)\ud s
 $$
 in case of $\mb S\in\cS_a$.  Taking the derivative with respect to $t$ gives 
 $$\big(\mb \xi^{\mb S}(t)-\nabla_{\mb x} v(t,\mb S(t))\big)^\top a(t,\mb S(t))\big(\mb \xi^{\mb S}(t)-\nabla_{\mb x} v(t,\mb S(t))\big)=0$$
   for all $t$, and the fact that the matrix $a(t,\mb S(t))$ is positive definite yields that  \eqref{xi ident eq}
must hold. For $\mb S\in\cS_a^+$, the matrix  $ a(s,\mb S(s))$ needs to be replaced by the matrix with components $ a_{ij}(s,\mb S(s))S_i(s)S_j(s)$, and we arrive at  \eqref{xi ident eq} by the same  arguments as in the case of $\mb S\in\cS_a$. Plugging \eqref{xi ident eq}  into \eqref{pathwise Ito in sf prop eq} and using (a) implies that the rightmost integral in \eqref{pathwise Ito in sf prop eq} vanishes identically, which establishes (b) by again taking  the derivative with respect to $t$.
\end{proof}

  Now we prepare for the proof of Theorem~\ref{JansonTysk thm} (b).
   The following lemma can be proved by means of a straightforward computation. 
   
   \begin{lemma}\label{TVP transform lemma} For $\mb x=(x_1,\dots, x_d)^\top\in\bR^d$ let $\exp(\mb x):=(e^{x_1},\dots, e^{x_d})^\top\in\bR^d_+$. Then $v(t,\mb x)$ solves $(\text{TVP}^+)$ if and only if $\wt v(t,\mb x):=v(t,\exp(\mb x))$ solves 
$$
\begin{cases}\wt v\in C^{1,2}([0,T)\times \bR^d)\cap C([0,T]\times \bR^d),\\  \frac{\partial \wt v}{\partial t}+\wt\cL \wt v=0 \text{ in $[0,T)\times \bR^d$,}\\
\wt v(T,\mb x)=\wt f(\mb x) \text{ for $\mb x\in \bR^d$,}
\end{cases}\leqno{(\wt{\text{TVP}})}
$$
where $\wt f(\mb x)=f(\exp(\mb x))$ and
\begin{equation}\label{wt cL}
\wt\cL:=\frac12\sum_{i,j=1}^d\wt a_{ij}(t,\mb x)\frac{\partial^2}{\partial x_i\partial x_j}+\sum_{i=1}^d\wt b_i(t,\mb x)\frac{\partial}{\partial x_i},\qquad \mb x\in \bR^d,
\end{equation}
for $\wt a_{ij}(t,\mb x):=a_{ij}(t,\exp(\mb x))$ and $\wt b_i(t,\mb x):=-\frac12a_{ii}(t,\exp(\mb x))$.
   \end{lemma}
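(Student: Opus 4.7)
The plan is to verify the claim by a direct change of variables, using the smooth diffeomorphism $\exp:\bR^d\to\bR^d_+$, $\mb x\mapsto (e^{x_1},\dots,e^{x_d})^\top$. Because this map (and its inverse, the component-wise logarithm) is $C^\infty$ and preserves all the relevant regularity (continuity on $[0,T]\times$~domain, $C^{1,2}$ regularity on the open strip, and polynomial-growth qualitative conditions), the regularity part of the lemma reduces to noting that $v\in C^{1,2}([0,T)\times\bR^d_+)\cap C([0,T]\times\bR^d_+)$ if and only if $\wt v\in C^{1,2}([0,T)\times\bR^d)\cap C([0,T]\times\bR^d)$. The terminal-condition equivalence is immediate from $\wt f(\mb x)=f(\exp(\mb x))$, and the time-derivative transforms trivially by $\partial_t\wt v(t,\mb x)=\partial_t v(t,\exp(\mb x))$.

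The main work is to identify $\wt\cL\wt v$ with $\cL^+ v$ evaluated at $\mb y=\exp(\mb x)$. First I would compute the first spatial derivatives by the chain rule: writing $y_i=e^{x_i}$,
\[
\frac{\partial \wt v}{\partial x_i}(t,\mb x)=y_i\frac{\partial v}{\partial y_i}(t,\mb y).
\]
Differentiating once more gives the two cases
\[
\frac{\partial^2\wt v}{\partial x_i\partial x_j}=y_iy_j\frac{\partial^2 v}{\partial y_i\partial y_j}\qquad (i\neq j),\qquad
\frac{\partial^2\wt v}{\partial x_i^2}=y_i^2\frac{\partial^2 v}{\partial y_i^2}+y_i\frac{\partial v}{\partial y_i}.
\]
Hence $y_iy_j\partial^2_{y_iy_j}v=\partial^2_{x_ix_j}\wt v$ for $i\neq j$, while $y_i^2\partial^2_{y_i}v=\partial^2_{x_i}\wt v - y_i\partial_{y_i}v=\partial^2_{x_i}\wt v-\partial_{x_i}\wt v$.

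Summing these against $a_{ij}(t,\mb y)$ and regrouping yields
\[
\cL^+ v(t,\mb y)=\tfrac12\sum_{i,j}a_{ij}(t,\mb y)\,\partial^2_{x_ix_j}\wt v(t,\mb x)-\tfrac12\sum_i a_{ii}(t,\mb y)\,\partial_{x_i}\wt v(t,\mb x),
\]
which, after substituting $\wt a_{ij}(t,\mb x)=a_{ij}(t,\exp(\mb x))$ and $\wt b_i(t,\mb x)=-\tfrac12 a_{ii}(t,\exp(\mb x))$, is exactly $\wt\cL\wt v(t,\mb x)$ as defined in \eqref{wt cL}. Combined with $\partial_t\wt v=\partial_t v\circ\exp$, this shows $\partial_t v+\cL^+v\equiv0$ on $[0,T)\times\bR^d_+$ if and only if $\partial_t\wt v+\wt\cL\wt v\equiv 0$ on $[0,T)\times\bR^d$, completing both directions of the equivalence.

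There is no real obstacle here; the whole statement is a one-step change of variables. The only point worth being careful about is the drift term $\wt b_i$, which arises solely from the non-commutation of differentiating $y_i=e^{x_i}$ twice in the same variable (the extra $y_i\partial_{y_i}v$ in the diagonal second derivative). Because this extra contribution appears only on the diagonal $i=j$, only the diagonal coefficients $a_{ii}$ enter $\wt b_i$, matching the formula in the statement.
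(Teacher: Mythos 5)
Your computation is correct and is exactly the ``straightforward computation'' the paper alludes to without writing out: the chain rule under the diffeomorphism $\exp$, with the extra diagonal term $y_i\partial_{y_i}v$ producing the drift $\wt b_i=-\tfrac12 a_{ii}$, plus the observation that $\exp$ and $\log$ being smooth makes the regularity and terminal-condition parts immediate. No discrepancies with the paper's (omitted) argument.
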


Next, the terminal-value problem  ${(\wt{\text{TVP}})}$ will be once again transformed into another auxiliary terminal-value problem. To this end, we need another transformation lemma, whose proof is also left to the reader.

    \begin{lemma}\label{wt TVP transform lemma} For $p>0$ let $g(\mb x):=1+\sum_{i=1}^de^{p x_i}$. Then $\wt v(t,\mb x)$ solves $(\wt{\text{TVP}})$ if and only if $\wh v(t,\mb x):=g(\mb x)^{-1}\wt v(t,\mb x)$ solves
     $$\begin{cases}\wh v\in C^{1,2}([0,T)\times \bR^d)\cap C([0,T]\times \bR^d),\\  \frac{\partial \wh v}{\partial t}+\wh\cL \wh v=0 \text{ in $[0,T)\times \bR^d$,}\\
\wh v(T,\mb x)=\wh f(\mb x) \text{ for $\mb x\in \bR^d$,}
\end{cases}\leqno{(\wh{\text{TVP}})}
$$
where $\wh f(\mb x)=\wt f(\mb x)/g(\mb x)$  and
\begin{equation}\label{wh cL}
\wh\cL:=\frac12\sum_{i,j=1}^d\wt a_{ij}(t,\mb x)\frac{\partial^2}{\partial x_i\partial x_j}+\sum_{i=1}^d\wh b_i(t,\mb x)\frac{\partial}{\partial x_i}+\wh c(t,\mb x),\qquad \mb x\in \bR^d,
\end{equation}
for 
\begin{align*}\wh b_i(t,\mb x)&=\wt b_i(t,\mb x)+pg(\mb x)^{-1}\sum_{j=1}^de^{px_j}\wt a_{ij}(t,\mb x),\\
\wh c(t,\mb x)&=\frac{p(p-1)}{2 g(\mb x)}\sum_{i=1}^d \wt a_{ii}(t,\mb x)e^{px_i}.
\end{align*}
      \end{lemma}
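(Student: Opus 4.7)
The plan is to verify the equivalence by a direct change-of-variables computation. Since $g$ is smooth and strictly positive on $\bR^d$, the map $\wt v\mapsto \wh v := g^{-1}\wt v$ is an invertible correspondence that preserves the regularity class $C^{1,2}([0,T)\times\bR^d)\cap C([0,T]\times\bR^d)$, so it suffices to show that the PDE in $(\wt{\text{TVP}})$ transforms exactly into the one in $(\wh{\text{TVP}})$. The terminal condition is immediate: $\wh v(T,\mb x)=g(\mb x)^{-1}\wt v(T,\mb x)=\wt f(\mb x)/g(\mb x)=\wh f(\mb x)$.

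First I would compute the derivatives of $\wt v = g\,\wh v$ by the product rule. Because $g$ depends only on $\mb x$, we have $\partial_t\wt v = g\,\partial_t\wh v$, and for the spatial derivatives
\[\partial_{x_i}\wt v = (\partial_{x_i}g)\wh v + g\,\partial_{x_i}\wh v,\]
\[\partial^2_{x_ix_j}\wt v = (\partial^2_{x_ix_j}g)\wh v + (\partial_{x_i}g)\,\partial_{x_j}\wh v + (\partial_{x_j}g)\,\partial_{x_i}\wh v + g\,\partial^2_{x_ix_j}\wh v.\]
The special structure of $g(\mb x)=1+\sum_k e^{p x_k}$ yields $\partial_{x_i}g = p\,e^{p x_i}$ and, crucially, $\partial^2_{x_ix_j}g = \delta_{ij}\,p^2\,e^{p x_i}$, so the Hessian of $g$ is purely diagonal.

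Next I would substitute these expressions into $\partial_t\wt v + \wt\cL\wt v = 0$ and divide through by $g$. Using the symmetry $\wt a_{ij}=\wt a_{ji}$ to combine the two mixed cross terms $(\partial_{x_i}g)\,\partial_{x_j}\wh v$ and $(\partial_{x_j}g)\,\partial_{x_i}\wh v$ in the second-derivative expansion, the coefficient of $\partial_{x_i}\wh v$ becomes $\wt b_i + g^{-1}\sum_j\wt a_{ij}(\partial_{x_j}g) = \wt b_i + p\,g^{-1}\sum_j\wt a_{ij}e^{p x_j}$, exactly matching $\wh b_i$. Collecting the remaining terms that multiply $\wh v$, only the diagonal part of $\nabla^2 g$ contributes, giving
\[\wh c = \frac{1}{2g}\sum_{i=1}^d \wt a_{ii}\,p^2 e^{p x_i} + \frac{p}{g}\sum_{i=1}^d \wt b_i\,e^{p x_i}.\]
Finally, substituting $\wt b_i=-\tfrac12 \wt a_{ii}$ from Lemma~\ref{TVP transform lemma} collapses this to $\wh c = \frac{p(p-1)}{2g}\sum_i \wt a_{ii}e^{p x_i}$, while the second-order term $\frac12\sum_{ij}\wt a_{ij}\,\partial^2_{x_ix_j}\wh v$ is unchanged.

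Every step in this derivation is reversible because $g$ is strictly positive everywhere, so the converse implication follows automatically. I do not anticipate any conceptual obstacle; the only care required is in bookkeeping when combining the mixed-derivative cross terms via the symmetry of $\wt a$ and when using the diagonal structure of the Hessian of $g$ to identify $\wh c$ after cancellation with the drift contribution.
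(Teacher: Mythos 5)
Your computation is correct and is precisely the straightforward product-rule/change-of-variables verification that the paper intends, since it explicitly leaves the proof of this lemma to the reader; in particular you correctly use the diagonal Hessian of $g$, the symmetry of $\wt a$, and the relation $\wt b_i=-\tfrac12\wt a_{ii}$ from Lemma~\ref{TVP transform lemma} to obtain $\wh b_i$ and $\wh c$, and the strict positivity of $g$ gives the reverse implication.
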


  \begin{proof}[Proof of Theorem~\ref{JansonTysk thm}]       We will show that ${(\wt{\text{TVP}})}$ admits a solution $\wt v$ if $|\wt f(\mb x)|\le c(1+\sum_{i=1}^de^{p x_i})$ for some $p>0$ and that $\wt v$ is unique in the class of functions that satisfy a similar estimate uniformly in $t$. 
   To this end, note that  the coefficients of $\wh \cL$ satisfy the conditions of~\citep[Theorem A.14]{JansonTysk}, i.e., $\wh a(t,\mb x)=\wt a(t,\mb x)$ is positive definite, there are constants $c_1$, $c_2$, $c_3$ such that for all $t$, $\mb x$, and $i,j$ we have that  $|\wt a_{ij}(t,\mb x)|\le c_1(1+|\mb x|^2)$, $|\wh b_i(t,\mb x)|\le c_2(1+|\mb x|)$, $|\wh c(t,\mb x)|\le c_3$, and $\wt a_{ij}$, $\wh b_i$, and $\wh c$ are locally H\"older continuous in $[0,T)\times\bR^d$. It therefore follows that ${(\wh{\text{TVP}})}$ admits a unique bounded solution $\wh v$ whenever $\wh f$ is bounded and continuous. But then $\wt v(t,\mb x):=g(\mb x)\wh v(t,\mb x)$ solves ${(\wt{\text{TVP}})}$ with terminal condition $\wt f(\mb x):=g(\mb x)\wh f(\mb x)$. Hence, ${(\wt{\text{TVP}})}$ admits a solution whenever $|\wt f(\mb x)|\le c(1+\sum_{i=1}^de^{p x_i})$ for some $p>0$.  Lemma~\ref{TVP transform lemma} now establishes the existence of solutions to ${({\text{TVP}^+})}$
if the terminal condition is continuous and has at most polynomial growth.
\end{proof}

\begin{remark}\label{growth remark}It follows from the preceding argument that, if $|f(\mb x)|\le c(1+|\mb x|^p)$, then the corresponding solution $v$ of ${({\text{TVP}^+})}$ satisfies $|v(t,\mb x)|\le \wt c(1+|\mb x|^p)$ for a certain constant $\wt c$ and with the same exponent $p$.
\end{remark}

\begin{proof}[Proof of Theorem~\ref{exotic Janson Tysk thm}] 
We first prove the result in the case of $(\text{TVP})$. The function $v_k$ will be well-defined if $f_{k+1}$ is continuous and has at most polynomial growth. It is easy to see that these two properties will follow if $v_{k+1}$ satisfies the following three conditions:
\begin{enumerate}[(i)]
\item $(\mb x_0,\dots, \mb x_{k+1},\mb x)\mapsto v_{k+1}(t,\mb x_0,\dots, \mb x_{k+1},\mb x)$ has at most polynomial growth;
\item
$\mb x\mapsto v_{k+1}(t_{k+1},\mb x_0,\dots, \mb x_{k+1},\mb x)$ is continuous for all $\mb x_0,\dots, \mb x_{k+1}$;
\item
$(\mb x_0,\dots, \mb x_{k+1})\mapsto v_{k+1}(t,\mb x_0,\dots, \mb x_{k+1},\mb x)$ is locally Lipschitz continuous, 
uniformly in $t$ and locally uniformly in $\mb x$, with a Lipschitz constant that grows at most polynomially. More precisely, there exist $p\ge0$ and $L\ge0$ such that, for  $|\mb x|,|\mb x_i|,|\mb y_i|\le m$ and  $t\in[t_{k+1},t_{k+2}]$, 
$$\Big| v_{k+1}(t,\mb x_0,\dots,  \mb x_{k+1},\mb x)-v_{k+1}(t,\mb y_0,\dots, \mb y_{k+1},\mb x)\Big|\le (1+m^p)L\sum_{i=0}^{k+1}|\mb x_i-\mb y_i|.
$$
\end{enumerate}

We will now show that $v_k$ inherits properties (i), (ii), and (iii) from $v_{k+1}$. Since these properties are obviously satisfied by $v_N$, the assertion will then follow  by backward induction.

To establish (i), let  $p,c>0$ be such that $\wt f_{k+1}(\mb x):=c(|\mb x_0|^p+\cdots+|\mb x_k|^p+|\mb x|^p+|\mb x|^p)$ satisfies
$-\wt f_{k+1}\le f_{k+1}\le \wt f_{k+1}$. Then let $\wt v_k(t,\mb x_0,\dots, \mb x_{k},\mb x)$ be the solution of  $(\text{TVP})$ with terminal condition $\wt f_{k+1}$ at time $t_{k+1}$.  Theorem~\ref{JansonTysk thm},~\citep[Theorem A.7]{JansonTysk}, and the linearity of solutions imply that $(\mb x_0,\dots, \mb x_{k},\mb x)\mapsto \wt v_{k}(t,\mb x_0,\dots, \mb x_{k},\mb x)$ has at most polynomial  growth, while the maximum principle in the form of~\citep[Theorem A.5]{JansonTysk} implies that $-\wt v_k\le v_k\le \wt v_k$. This establishes (i).

Condition (ii) is satisfied automatically, as solutions to $(\text{TVP})$ are continuous by construction.

To obtain (iii), let $p$ and $L$ be as in (iii) and $\mb x_i,\mb y_i$ be given. We take $m$ so that $m\ge |\mb x_i|\vee|\mb y_i|$
for $i=1,\dots, k$ and let $\delta:=L\sum_{i=0}^k|\mb x_i-\mb y_i|$. Then
\begin{align*}
-(1+m^p+|\mb x|^p)\delta\le v_{k+1}(t_{k+1},\mb x_0,\dots, \mb x_k,\mb x,\mb x)-v_{k+1}(t_{k+1},\mb y_0,\dots, \mb y_k,\mb x,\mb x) \le (1+m^p+|\mb x|^p)\delta.
\end{align*}
Now we define $u(t,\mb x)$ as the solution of (TVP) with terminal condition $u(t_{k+1},\mb x)=|\mb x|^p$ at time $t_{k+1}$. Theorem~\ref{JansonTysk thm} implies that $u$ is well defined, and the maximum principle and~\citep[Theorem A.7]{JansonTysk} imply that $0\le u(t,\mb x)\le c|\mb x|^p$ for some constant $c\ge0$. Another application of the maximum principle yields that 
$$-(1+m^p+u(t,\mb x))\delta\le v_{k}(t,\mb x_0,\dots, \mb x_k,\mb x)-v_{k}(t,\mb y_0,\dots, \mb y_k,\mb x)\le (1+m^p+u(t,\mb x))\delta
$$
 for all $t$ and $\mb x$, which establishes that (iii) holds for $v_k$ with  the same $p$ and the new Lipschitz constant $(1+c)L$.
  
Now we turn to the proof in case of $(\text{TVP}^+)$.  It is clear from our proof of Theorem~\ref{JansonTysk thm} (b) that $(\text{TVP}^+)$ inherits the maximum principle from $(\wh{\text{TVP}})$. Moreover, Remark~\ref{growth remark}  shows that $v_k$ inherits property (i) from $v_{k+1}$. So Remark~\ref{growth remark} can replace 
\citep[Theorem A.7]{JansonTysk} in the preceding argument. Therefore, the proof for $(\text{TVP}^+)$ can be carried out in the same way as for  $(\text{TVP})$.
\end{proof}

\begin{proof}[Proof of Theorem~\ref{no arb thm}] We first prove the result in case of $\cX$.  Let us suppose by way of contradiction that there exists an admissible arbitrage opportunity in $\cX$, and let $0=t_0<t_1<\cdots <t_N=t_{N+1}=T$ and $v_k$ denote the corresponding time points and functions as in Definition~\ref{strategies def}.

Under our assumptions, the martingale problem for the operator $\cL$ is well-posed~\citep{StroockVaradhanMartingaleProblem}. Let $\bP_{t,\mb x}$ denote the corresponding Borel probability measures on  $C([t,T],\bR^d)$ under which the coordinate process, $\left(\mb X(u)\right)_{t\leq u\leq T}$,  is a diffusion process with generator $\cL$ and satisfies $X(t)=x$ $\bP_{t,\mb x}$-a.s. In particular, $X_i$ is a continuous local $\bP_{t,\mb x}$-martingale for $i=1,\dots, d$.
Moreover, the support theorem~\citep[Theorem 3.1]{StroockVaradhanSupport} states that  the law of $(\mb X(u))_{t\le u\le T}$ under $\bP_{t,\mb x}$ has full support on $C_{\mb x}([t,T],\bR^d):=\{\mb\om \in C([t,T],\bR^d)\,|\,\mb\om(t)=\mb x\}$.

In a first step, we now use these facts to show that all functions $v_k$ are nonnegative. To this end, we note first that the support theorem implies that the law of $(\mb X(t_1),\dots,\mb X(t_N))$ under $\bP_{0,\mb x}$ has full support on $(\bR^d)^N$. Since $\bP_{0,\mb x}$-a.e.~trajectory in  $C_{\mb x}([0,T],\bR^d)$ belongs to $\cS_a$, it follows that the set $\big\{(\mb S(t_1),\dots,\mb S(t_N))\,|\,\mb S\in\cS_a,\ \mb S(0)=\mb x\big\}$ is dense in $(\bR^d)^N$. Condition~(a) of Definition~\ref{arbitrage def} and the continuity of $v_N$ thus imply that $v_N(T,\mb x_0,\dots, \mb x_{N+1})\ge0$ for all $\mb x_0,\dots, \mb x_{N+1}$. In the same way, we get from the admissibility of the arbitrage opportunity that  $v_k(t,\mb x_0,\dots,\mb x_k,\mb x)\ge -c$ for all $k$, $t\in[t_k,t_{k+1}]$ and $\mb x_0,\dots,\mb x_k,\mb x\in\bR^d$. 

For the moment, we  fix $\mb x_0,\dots,\mb x_{N-1}$ and consider the function $u(t,\mb x):=v_{N-1}(t,\mb x_0,\dots,\mb x_{N-1},\mb x)$. Let $Q\subset\bR^d$ be a bounded domain whose closure is contained in $\bR^d$ and let $\tau:=\inf\{s\,|\,\mb X(s)\notin Q\}$ be the first  exit time from $Q$. By It\^o's formula and the fact that $u$ solves $(\text{TVP})$ we have $\bP_{t,\mb x}$-a.s.~for $t\in[t_{N-1},T)$ that
\begin{equation}
u(T\wedge\tau,\mb X({T\wedge\tau}))=u(t,\mb x)+\int_t^{T\wedge\tau}\nabla_{\mb x}u(s,\mb X(s))\ud \mb X(s).
\end{equation}
Since $\nabla_{\mb x}u$ and the coefficients of $\cL$ are bounded in the closure of $Q$, the stochastic integral on the right-hand side is a true martingale. Therefore,
\begin{equation}\label{u(t,x) rep eq}
u(t,\mb x)=\bE_{t,\mb x}[\,u(T\wedge\tau,\mb X({T\wedge\tau}))\,].
\end{equation}
Now let us take an increasing sequence $Q_1\subset Q_2\subset\cdots$ of bounded domains exhausting $\bR^d$ and whose closures are  contained in $\bR^d$. By $\tau_n$ we denote the exit time from $Q_n$. Then, an application of \eqref{u(t,x) rep eq} for each $\tau_n$, Fatou's lemma in conjunction with the fact that $u\ge-c$, and the already established nonnegativity of $u(T,\cdot)$  yield 
\begin{equation}\label{localization eq}
u(t,\mb x)=\lim_{n\uparrow\infty}\bE_{t,\mb x}[\,u(T\wedge\tau_n,\mb X({T\wedge\tau_n}))\,]\ge \bE_{t,\mb x}[\,u(T,\mb X({T}))\,]\ge0.
\end{equation}
This establishes the nonnegativity of $v_{N-1}$ and in particular of the terminal condition $f_{N-1}$ for $v_{N-2}$. We may therefore  repeat the preceding argument for $v_{N-2}$  and so forth.  Hence,  $v_k\ge0$ for all~$k$. 

Now let $\mb S\in\cS_a$ and $T_0$ be such that $V^{\mb S}_{\mb\xi}(0)\le0$ and $V^{\mb S}_{\mb\xi}(T_0)>0$, which exists according to the assumption made at the beginning of this proof. If $k$ is such that $t_k< T_0\le t_{k+1}$ and $\mb x_0:=\mb S(0)$, then $v_0(0,\mb x_0)=0$ and $v_k(T_0,\mb S(t_0),\dots,\mb S(t_k),\mb S(T_0))>0$. By continuity, we actually have $v_k(T_0,\cdot)>0$ in  an open neighborhood $U\subset C_{\mb x}([0,T],\bR^d)$ of the path $\mb S$.

 Since $\bP_{0,\mb x_0}$-a.e.~sample path belongs to $\cS_a$, It\^o's formula gives that $\bP_{0,\mb x_0}$-a.s.,
$$v_k(T_0,\mb X(t_0),\dots,\mb X(t_k),\mb X(T_0))=v_0(0,\mb x_0)+\int_0^{T_0}\mb\xi^{\mb X}(t)\ud\mb X(t).
$$
Localization as in \eqref{localization eq} and using the fact that $v_\ell\ge0$ for all $\ell$ implies that 
$$0=v_0(0,\mb x_0)\ge\bE_{0,\mb x_0}\big[\,v_k(T_0,\mb X(t_0),\dots,\mb X(t_k),\mb X(T_0))\,\big]\ge0.
$$
Applying once again the support theorem now yields a contradiction to the fact that $v_k(T_0,\cdot)>0$ in  the open set $U$. This completes the proof for $\cX$. 

Now we turn to the proof for $\cX^+$. In this case, the martingale problem for the operator $\wt \cL$ defined in \eqref{wt cL} is well posed since the coefficients of $\wt \cL$ are again bounded and continuous~\citep{StroockVaradhanMartingaleProblem}. These properties of the coefficients also guarantee that  the support theorem holds~\citep[Theorem 3.1]{StroockVaradhanSupport}. If $(\wt \bP_{s,\mb x},\wt{\mb X})$ is a corresponding diffusion process, we can consider the laws of $\mb X(t):=\exp(\wt{\mb{X}}(t))$ and, by Lemma~\ref{TVP transform lemma}, obtain a solution to the martingale problem for $\cL^+$, which satisfies the support theorem with state space $\bR^d_+$. We can now simply repeat the arguments from the proof for $\cX$ to also get the result for $\cX^+$.
 \end{proof}

 \subsection{Proofs of the results from Section~\ref{Functional Section}}
 \begin{proof}[Proof of Proposition~\ref{sf prop functional}]
 The proof is analogous to the  proof of Proposition~\ref{sf prop}. 
For  $ \mb S\in\cS_a$, all that is needed in addition to the arguments of Proposition~\ref{sf prop} is the fact that the quadratic variation of 
$$\int_r^t\big(\mb \xi^{\mb S}(s)-\nabla_{ x} F_s(\mb S_{[r,u],s}) \big)\ud  \mb S(s)$$
is given by
 $$\int_r^t\big( \mb \xi^{ \mb S}(s)-\nabla_{ x} F_s(\mb S_{[r,u],s})\big) ^\top  a(s, \mb S(s))\big( \xi^{ \mb S}(s)-\nabla_{ x} F_s(\mb S_{[r,u],s})\big) \ud s;
 $$
see~\citep[Proposition 2.1]{SchiedVoloshchenkoAssociativity}.  For $ \mb S\in\cS_a^+$, the matrix  $ a(s, \mb S(s))$ has to be replaced by the matrix with components $ a_{ij}(s, \mb S(s))S_i(s)S_j(s)$. \end{proof}

 To prove Theorem~\ref{JansonTysk thm F} and Theorem~\ref{no arb thm functional} we need the following lemma, which is a straightforward extension of Lemma~\ref{TVP transform lemma} 
to the functional setting. Its proof is therefore left to the reader. For $\mb X$ in the Skorohod space $D([0,T],\bR^d)$ 
 we set $\left(\exp(\mb X)\right)_t
 =\exp(\mb X_t):=\left(\exp(\mb X(u))\right)_{0\le u\le t}\in D([0,t],\bR^d_+)$.

 \begin{lemma}\label{TVP transform lemma functional}  The functional $F_t(\mb X_t)$  solves $( \text{FTVP}^{+})$
    if and only if $\wt F_t(\mb X_t):=F_t(\exp(\mb X_t))$ solves 
$$
\begin{cases}\wt F\in\mathbb{C}^{1,2}([0,T))&
\text{satisfies the conditions  from Definition~\ref{Def general adm integrands},}\\ \mathcal{D} \wt F+\wt \cA \wt F=0&\text{in $\bigcup_{t\in[0,T)}D([0,t], \bR^d),$}\\
\wt F_T(\mb X_T)=\wt H(\mb X_T) &\text{for $\mb X_T\in D([0,T],\bR^d)$,}
\end{cases}\leqno{(\wt{\text{FTVP}})}
$$
where $\wt H(\mb X_T)=H(\exp(\mb X_T))$ 
 and
\begin{equation}\label{wt cL functional}
\wt\cA:=\frac12\sum_{i,j=1}^d\wt a_{ij}(t,\mb X(t))\nabla^2_{ij}+\sum_{i=1}^d\wt b_{i}(t,\mb X(t))\partial_{i}\qquad \text{in $\bigcup_{t\in[0,T)}D([0,t], \bR^d),$}
\end{equation}
where, as in~\cite[Eq.~(15)]{CF}, $\partial_i$ are the partial   vertical derivatives,
 $\wt a_{ij}(t,\mb X(t)):=a_{ij}(t,\exp(\mb X(t)))$, and $\wt b_{i}(t,\mb X(t)):=-\frac12a_{ii}(t,\exp(\mb X(t)))$.
   \end{lemma}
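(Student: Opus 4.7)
The plan is to establish, via the chain rule for functional derivatives, explicit identities expressing the horizontal and vertical derivatives of $\wt F_t(\mb X_t) := F_t(\exp(\mb X_t))$ in terms of those of $F$, and then to substitute these into $\mathcal{D}\wt F + \wt\cA\wt F$ and identify the result with $\mathcal{D} F + \cA^+ F$ evaluated at $\exp(\mb X_t)$.

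Since $\exp$ is applied pointwise, the horizontal extension of $\mb X_t$ corresponds, after $\exp$, to the horizontal extension of $\exp(\mb X_t)$, so directly from the definition of $\mathcal{D}$ one obtains $\mathcal{D}\wt F_t(\mb X_t) = \mathcal{D} F_t(\exp(\mb X_t))$. For the vertical derivatives, set $\mb Y(t) := \exp(\mb X(t))$ and observe that applying a shock $h\mb e_i$ to $\mb X(t)$ shifts $\mb Y(t)$ by $Y_i(t)(e^h - 1)\mb e_i$, so
$$
\wt F_t(\mb X_t^{h \mb e_i}) = F_t\bigl(\mb Y_t^{Y_i(t)(e^h - 1)\mb e_i}\bigr).
$$
A second-order Taylor expansion of $F_t$ together with $e^h - 1 = h + h^2/2 + O(h^3)$ then yields
$$
\partial_i \wt F_t(\mb X_t) = Y_i(t)\,\partial_i F_t(\exp(\mb X_t)),
$$
$$
\nabla^2_{ii} \wt F_t(\mb X_t) = Y_i(t)^2\,\nabla^2_{ii} F_t(\exp(\mb X_t)) + Y_i(t)\,\partial_i F_t(\exp(\mb X_t)),
$$
where the extra first-order term on the second line arises from the quadratic correction $h^2/2$ in $e^h - 1$. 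The analogous two-variable expansion, using $(e^h - 1)(e^k - 1) = hk + O(h^2 k + h k^2)$, gives the cleaner identity $\nabla^2_{ij} \wt F_t(\mb X_t) = Y_i(t) Y_j(t)\,\nabla^2_{ij} F_t(\exp(\mb X_t))$ for $i \neq j$.

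Substituting these identities into $\mathcal{D}\wt F + \tfrac{1}{2}\sum_{i,j}\wt a_{ij}(t,\mb X(t))\nabla^2_{ij}\wt F + \sum_i \wt b_i(t,\mb X(t))\partial_i \wt F$ and using $\wt a_{ij}(t,\mb X(t)) = a_{ij}(t,\mb Y(t))$, the quadratic terms assemble into $\cA^+ F_t(\mb Y_t)$, whereas the first-order terms collect into $\sum_i \bigl(\tfrac{1}{2} a_{ii}(t,\mb Y(t)) + \wt b_i(t,\mb X(t))\bigr) Y_i(t)\,\partial_i F_t(\mb Y_t)$. This residual term vanishes precisely for the drift $\wt b_i(t,\mb X(t)) = -\tfrac{1}{2} a_{ii}(t,\exp(\mb X(t)))$ specified in the lemma. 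Hence $\mathcal{D}\wt F + \wt\cA\wt F = 0$ on $\bigcup_{t \in [0,T)} D([0,t],\bR^d)$ is equivalent, via the pointwise bijection $\exp$, to $\mathcal{D} F + \cA^+ F = 0$ on $\bigcup_{t \in [0,T)} D([0,t],\bR_+^d)$, and the terminal identification $\wt F_T(\mb X_T) = H(\exp(\mb X_T)) = \wt H(\mb X_T)$ is immediate.

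The one genuinely technical step is to verify that the notions of horizontal and vertical differentiability from~\citep{CF} and the regularity conditions in Definition~\ref{Def general adm integrands} are preserved under pointwise composition with the smooth bijection $\exp$; once this has been checked, the algebraic bookkeeping above is routine and the proof proceeds exactly as in the finite-dimensional Lemma~\ref{TVP transform lemma}.
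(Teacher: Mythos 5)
Your proof is correct and follows essentially the route the paper intends: the paper leaves this lemma to the reader as a straightforward extension of Lemma~\ref{TVP transform lemma}, justified by the chain rule for functional (horizontal and vertical) derivatives and the remark that regularity in the sense of Definition~\ref{Def general adm integrands} is preserved because $\exp$ is a sufficiently regular pointwise transformation. Your explicit computation of $\mathcal{D}\wt F$, $\partial_i\wt F$, and $\nabla^2_{ij}\wt F$, with the diagonal second-order correction producing exactly the drift $\wt b_i=-\tfrac12 a_{ii}$, is precisely that argument carried out in detail.
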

\ignore{
     \begin{lemma}\label{wt TVP transform lemma functional} For $p>0$ let $G(\mb X(t)):=1+\sum_{i=1}^de^{p X_{i}(t)}$. Then $\wt F_t(\mb X_t)$ solves $(\wt{\text{FTVP}})$ if and only if $\wh F_t(\mb X_t):=G(\mb X(t))^{-1}\wt F_t(\mb X_t)$ solves
     $$\begin{cases}\wh F\in \mathbb{C}^{1,2}([0,T))&\text{satisfies the conditions  from Definition~\ref{Def general adm integrands}},\\  
 \mathcal{D} \wh F+\wh \cA \wh F=0&\text{in $\bigcup_{t\in[0,T)}D([0,t], \bR^d),$}\\
\wh F_T(\mb X_T)=\wh H(\mb X_T)& \text{for $\mb X_T\in D([0,T],\bR^d)$,}
\end{cases}\leqno{(\wh{\text{FTVP}})}
$$
where $\wh H(\mb X_T)=\wt H(\mb X_T)/G(\mb X(T))$  and
\begin{equation}\label{wh cA}
\wh\cA:=\frac12\sum_{i,j=1}^d\wt a_{ij}(t,\mb X(t))\nabla^2_{ij}+\sum_{i=1}^d\wh b_i(t,\mb X(t))\partial_{i}+\wh c(t,\mb X(t))\qquad \text{in $\bigcup_{t\in[0,T)}D([0,t], \bR^d),$}
\end{equation}
for 
\begin{align*}\wh b_i(t,\mb X(t))&=\wt b_i(t,\mb X(t))+pG(\mb X(t))^{-1}\sum_{j=1}^de^{pX_j(t)}\wt a_{ij}(t,\mb X(t)),\\
\wh c(t,\mb X(t))&=\frac{p(p-1)}{2 G(\mb X(t))}\sum_{i=1}^d \wt a_{ii}(t,\mb X(t))e^{pX_i(t)}.
\end{align*}
      \end{lemma}}
      Note that 
      the chain rule for functional derivatives (see~\citep[p.6]{Dupire}) 
      implies the equivalence of the PDEs in $(\wt{\text{FTVP}})$ and $(\text{FTVP}).$ 
      \ignore{Analogously, the PDEs in $(\wt{\text{FTVP}})$ and $(\wh{\text{FTVP}})$ are equivalent, due to the product rule for functional derivatives.}
       Regarding the regularity conditions in Definition~\ref{Def general adm integrands}, we note that $\wt F$ will be regular enough if and only if $F$ is regular enough (because $\exp(\mb X(t))$ is a sufficiently regular functional). 
       \ignore{Similarly, $\wh F$ will be regular in the sense of Definition~\ref{Def general adm integrands} if and only if $\wt F$ is.} 
     
  \begin{proof}[Proof of Theorem~\ref{JansonTysk thm F}]       
  Part (a) directly follows from~\citep[Theorem 20]{Ji2013}. 
  To prove part (b), note that  the coefficients of $\wt \cA$ satisfy the conditions of~\citep[Theorem 20]{Ji2013}, i.e., 
    $\wt a(t,\mb X(t))$ is positive definite 
  and can be written as $\wt \sigma(t,\mb X(t))\wt \sigma(t,\mb X(t))^\top$ with a Lipschitz continuous volatility coefficient $\wt \sigma,$ 
     and $\wt b_i$ is also  Lipschitz. 
     It therefore follows that ${(\wt{\text{FTVP}})}$ admits a unique solution $\wt F\in \bC^{1,2}([0,T))$ if $\wt H\in C^2_{l,lip}(D([0,T],\bR^d)).$
  \ignore{But then $\wt F(t,\mb X_t):=G(\mb X(t))\wh F(t,\mb X_t)$ solves ${(\wt{\text{FTVP}})}$ with terminal condition $\wt H(\mb X_T):=G(\mb X(T))\wh H(\mb X_T).$ 
  Hence, ${(\wt{\text{FTVP}})}$ admits a unique solution whenever $\wt H\in C^2_{l,lip}(D([0,T],\bR^d))$.}  
   Lemma~\ref{TVP transform lemma functional} now establishes the existence of solutions to ${({\text{FTVP}^+})}$ if the terminal condition is of class $C^2_{l,lip}(D([0,T],\bR_{+}^d))$.
   \end{proof}

\begin{proof}[Proof of Theorem~\ref{no arb thm functional}] 
The proof is similar to the one of Theorem~\ref{no arb thm}. We first  consider the case of $\cY$. Let $\mb X$ and $\bP_{t,\mb x}$ ($0\le t\le T$, $\mb x\in\bR^d$) be 
as in the proof of Theorem~\ref{no arb thm}. For a path $\mb Y\in C([0,T],\bR^d)$, we define $\bbar\bP_{t,\mb Y_t}$ as that probability measure on  $C([0,T],\bR^d)$ under which the coordinate process $\mb X$ satisfies $\bbar\bP_{t,\mb Y_t}$-a.s.~$\mb X(s)=\mb Y(s)$ for $0\le s\le t$ and under which the law of $(\mb X(u))_{t\le u\le T}$ is equal to $\bP_{t,\mb Y(t)}$. The  support theorem~\cite[Theorem 3.1]{StroockVaradhanSupport} then states that  the law of $(\mb X(u))_{0\le u\le T}$ under $\bbar\bP_{t,\mb Y_t}$ has full support on $C_{\mb Y_t}([0,T],\bR^d):=\{\mb\om \in C([0,T],\bR^d)\,|\,\mb\om_t=\mb Y_t\}$.

Now suppose by way of contradiction that there exists an admissible arbitrage opportunity arising from a non-anticipative functional  $F$ as in Definition~\ref{strategies def functional}. 
In a first step, we  show that $F$ is nonnegative on $[0,T]\times C([0,T],\bR^d)$. As in the  proof of Theorem~\ref{no arb thm}, the support theorem implies that  $\big\{(\mb S(t))_{0\leq t\leq T}\,|\,\mb S\in\cS_a,\ \mb S(0)=\mb x\big\}$ is dense in  $C_{\mb x}([0,T],\bR^d)$.  Condition~(a) of Definition~\ref{arbitrage def} and the left-continuity of $F$ in the sense of~\cite[Definition 3]{CF}
 thus imply that $F_T(\mb Y)\ge0$ for all $\mb Y\in C([0,T],\bR^d)$. In the same way, we get from the admissibility of the arbitrage opportunity that  $F_t(\mb Y_t)\ge -c$ 
  for all $t\in[0,T]$ and $\mb Y\in C([0,T],\bR^d)$. 
To show that actually $F_t(\mb Y_t)\ge 0$, let $Q\subset\bR^d$ be a bounded domain whose closure is contained in $\bR^d$ and let $\tau:=\inf\{s\,|\,\mb X(s)\notin Q\}$ be the first  exit time from $Q$. By the functional change of variables formula, in conjunction with  the fact that $F$ solves $(\text{FTVP})$ (on continuous paths), we obtain $\bbar\bP_{t,\mb Y_t}$-a.s.~for $t\in[0,T)$ that
\begin{equation}\label{stochint}
F_{T\wedge\tau}(\mb X_{T\wedge\tau})=F_t(\mb Y_t)+\int_t^{T\wedge\tau}\nabla_{\mb x}F_s(\mb X_s)\ud \mb X(s).
\end{equation}
By~\citep[Proposition 2.1]{SchiedVoloshchenkoAssociativity}, 
we have
$$
\Big \langle\int_t^{\cdot\wedge\tau}\nabla_{\mb x}F_s(\mb X_s)\ud \mb X(s)\Big\rangle(T)= \int_t^{T\wedge\tau}\nabla_{\mb x}F_s(\mb X_s)^\top  a(s, \mb X(s))\nabla_{\mb x}F_s(\mb X_s) \ud s.
 $$
 Since $\nabla_{\mb x}F$ and the coefficients of $\cA$ are bounded in the closure of $Q$, the stochastic integral on the right-hand side of \eqref{stochint} is a true martingale. 
Therefore,
\begin{equation}\label{u(t,x) rep eq functional}
F_t(\mb Y_t)=\bbar\bE_{t,\mb Y_t}[\,F_{T\wedge\tau}(\mb X_{T\wedge\tau})\,].
\end{equation}
Now let us take an increasing sequence $Q_1\subset Q_2\subset\cdots$ of bounded domains exhausting $\bR^d$ and whose closures are contained in $\bR^d$. By $\tau_n$ we denote the exit time from $Q_n$. Then, an application of \eqref{u(t,x) rep eq functional} for each $\tau_n$, Fatou's lemma in conjunction with the fact that $F\ge-c$, and the already established nonnegativity of $F_T(\cdot)$ 
 yield 
\begin{equation}\label{localization eq functional}
F_t(\mb Y_t)=\lim_{n\uparrow\infty}\bbar\bE_{t,\mb Y_t}[\,F_{T\wedge\tau_n}(\mb X_{T\wedge\tau_n})\,]\ge \bbar\bE_{t,\mb Y_t}[\,F_T(\mb X_{T})\,]\ge0.
\end{equation}
This establishes the nonnegativity of $F$ on $[0,T]\times C([0,T],\bR^d)$.

Now let $\mb S\in\cS_a$ and $T_0$ be such that $V^{\mb S}_{\mb\xi}(0)\le0$ and $V^{\mb S}_{\mb\xi}(T_0)>0$. Since $V^{\mb S}_{\mb\xi}(t)=F_t(\mb S_t)$ by Proposition~\ref{sf prop functional}, we have $F_0(\mb S_0)=0$ and $F_{T_0}(\mb S_{T_0})>0$. By left-continuity of $F$, 
we actually have $F_{T_0}(\cdot)>0$ 
 in  an open neighborhood $U\subset C_{\mb S(0)}([0,T],\bR^d)$ of the path $\mb S$.

Since $\bP_{0,\mb S(0)}$-a.e.~sample path belongs to $\cS_a$, the functional change of variables formula gives that $\bP_{0,\mb S(0)}$-a.s.,
$$F_{T_0}(\mb X_{T_0})=F_0(\mb S_0)+\int_0^{T_0}\mb\xi^{\mb X}(t)\ud\mb X(t).
$$
Localization as in \eqref{localization eq functional} and using the fact that $F\ge0$ implies that 
$$0=F_0(\mb S_0)\ge\bE_{0,\mb S(0)}\big[\,F_{T_0}(\mb X_{T_0})\,\big]\ge0.
$$
Applying once again the support theorem now yields a contradiction to the fact that $F_{T_0}(\cdot)>0$ in  the open set $U$. This completes the proof for $\cY$. 

The proof for $\cY^+$ is completed by an exponential transformation, as in the proof of Theorem~\ref{no arb thm}. 
\end{proof}

 \setlength{\bibsep}{1.8pt}

 \bibliography{CTBook}{}
\bibliographystyle{abbrv}

\end{document}